\newcommand{\thenetherlands}{
The Netherlands}
\newcommand{\ourinstitute}{
Radboud University Nijmegen}
\newcommand{\ouraffiliation}{
Institute for Computing and Information Sciences \\
Heyendaalseweg 135, 6525 AJ Nijmegen\\
\thenetherlands}
\newcommand{\tueinstitute}{
Technical University Eindhoven\\
\thenetherlands}
\newcommand{\ourtitle}{
Pure Type Systems without Explicit Contexts}
\title{\ourtitle}
\author{
  Herman Geuvers
  \institute{\ourinstitute} 
  \institute{\tueinstitute} \and 
  Robbert Krebbers\qquad\qquad 
  James McKinna\qquad\qquad 
  Freek Wiedijk\institute{\ourinstitute\\\ouraffiliation} 
} 
   \newtheorem{definition}{Definition} 
   \newtheorem{proposition}[definition]{Proposition} 
   \newtheorem{lemma}[definition]{Lemma} 
   \newtheorem{theorem}[definition]{Theorem} 
   \newtheorem{corollary}[definition]{Corollary} 
\def\bar{\mathrel{|}}
\newdimen\proofrulebreadth \proofrulebreadth=.05em
\newdimen\proofdotseparation \proofdotseparation=1.25ex
\newdimen\proofrulebaseline \proofrulebaseline=2ex
\let\then\relax
\def\hfi{\hskip0pt plus.0001fil}
\mathchardef\squigto="3A3B
\newif\ifinsideprooftree\insideprooftreefalse
\newif\ifonleftofproofrule\onleftofproofrulefalse
\newif\ifproofdots\proofdotsfalse
\newif\ifdoubleproof\doubleprooffalse
\let\wereinproofbit\relax
\newdimen\shortenproofleft
\newdimen\shortenproofright
\newdimen\proofbelowshift
\newbox\proofabove
\newbox\proofbelow
\newbox\proofrulename
\def\shiftproofbelow{\let\next\relax\afterassignment\setshiftproofbelow\dimen0 }
\def\shiftproofbelowneg{\def\next{\multiply\dimen0 by-1 }%
\afterassignment\setshiftproofbelow\dimen0 }
\def\setshiftproofbelow{\next\proofbelowshift=\dimen0 }
\def\setproofrulebreadth{\proofrulebreadth}
\def\prooftree{
%
\ifnum  \lastpenalty=1
\then   \unpenalty
\else   \onleftofproofrulefalse
\fi
%
\ifonleftofproofrule
\else   \ifinsideprooftree
        \then   \hskip.5em plus1fil
        \fi
\fi
%
\bgroup
\setbox\proofbelow=\hbox{}\setbox\proofrulename=\hbox{}%
\let\justifies\proofover\let\leadsto\proofoverdots\let\Justifies\proofoverdbl
\let\using\proofusing\let\[\prooftree
\ifinsideprooftree\let\]\endprooftree\fi
\proofdotsfalse\doubleprooffalse
\let\thickness\setproofrulebreadth
\let\shiftright\shiftproofbelow \let\shift\shiftproofbelow
\let\shiftleft\shiftproofbelowneg
\let\ifwasinsideprooftree\ifinsideprooftree
\insideprooftreetrue
%
\setbox\proofabove=\hbox\bgroup$\displaystyle 
\let\wereinproofbit\prooftree
%
\shortenproofleft=0pt \shortenproofright=0pt \proofbelowshift=0pt
%
\onleftofproofruletrue\penalty1
}
\def\eproofbit{
%
\ifx    \wereinproofbit\prooftree
\then   \ifcase \lastpenalty
        \then   \shortenproofright=0pt  
        \or     \unpenalty\hfil         
        \or     \unpenalty\unskip       
        \else   \shortenproofright=0pt  
        \fi
\fi
%
\global\dimen0=\shortenproofleft
\global\dimen1=\shortenproofright
\global\dimen2=\proofrulebreadth
\global\dimen3=\proofbelowshift
\global\dimen4=\proofdotseparation
\global\count255=\proofdotnumber
%
$\egroup  
%
\shortenproofleft=\dimen0
\shortenproofright=\dimen1
\proofrulebreadth=\dimen2
\proofbelowshift=\dimen3
\proofdotseparation=\dimen4
\proofdotnumber=\count255
}
\def\proofover{
\eproofbit 
\setbox\proofbelow=\hbox\bgroup 
\let\wereinproofbit\proofover
$\displaystyle
}%
\def\proofoverdbl{
\eproofbit 
\doubleprooftrue
\setbox\proofbelow=\hbox\bgroup 
\let\wereinproofbit\proofoverdbl
$\displaystyle
}%
\def\proofoverdots{
\eproofbit 
\proofdotstrue
\setbox\proofbelow=\hbox\bgroup 
\let\wereinproofbit\proofoverdots
$\displaystyle
}%
\def\proofusing{
\eproofbit 
\setbox\proofrulename=\hbox\bgroup 
\let\wereinproofbit\proofusing
\kern0.3em$
}
\def\endprooftree{
\eproofbit 
  \dimen5 =0pt
%
\dimen0=\wd\proofabove \advance\dimen0-\shortenproofleft
\advance\dimen0-\shortenproofright
%
\dimen1=.5\dimen0 \advance\dimen1-.5\wd\proofbelow
\dimen4=\dimen1
\advance\dimen1\proofbelowshift \advance\dimen4-\proofbelowshift
%
\ifdim  \dimen1<0pt
\then   \advance\shortenproofleft\dimen1
        \advance\dimen0-\dimen1
        \dimen1=0pt
        \ifdim  \shortenproofleft<0pt
        \then   \setbox\proofabove=\hbox{%
                        \kern-\shortenproofleft\unhbox\proofabove}%
                \shortenproofleft=0pt
        \fi
\fi
%
\ifdim  \dimen4<0pt
\then   \advance\shortenproofright\dimen4
        \advance\dimen0-\dimen4
        \dimen4=0pt
\fi
%
\ifdim  \shortenproofright<\wd\proofrulename
\then   \shortenproofright=\wd\proofrulename
\fi
%
\dimen2=\shortenproofleft \advance\dimen2 by\dimen1
\dimen3=\shortenproofright\advance\dimen3 by\dimen4
%
\ifproofdots
\then
        \dimen6=\shortenproofleft \advance\dimen6 .5\dimen0
        \setbox1=\vbox to\proofdotseparation{\vss\hbox{$\cdot$}\vss}%
        \setbox0=\hbox{%
                \advance\dimen6-.5\wd1
                \kern\dimen6
                $\vcenter to\proofdotnumber\proofdotseparation
                        {\leaders\box1\vfill}$%
                \unhbox\proofrulename}%
\else   \dimen6=\fontdimen22\the\textfont2 
        \dimen7=\dimen6
        \advance\dimen6by.5\proofrulebreadth
        \advance\dimen7by-.5\proofrulebreadth
        \setbox0=\hbox{%
                \kern\shortenproofleft
                \ifdoubleproof
                \then   \hbox to\dimen0{%
                        $\mathsurround0pt\mathord=\mkern-6mu%
                        \cleaders\hbox{$\mkern-2mu=\mkern-2mu$}\hfill
                        \mkern-6mu\mathord=$}%
                \else   \vrule height\dimen6 depth-\dimen7 width\dimen0
                \fi
                \unhbox\proofrulename}%
        \ht0=\dimen6 \dp0=-\dimen7
\fi
%
\let\doll\relax
\ifwasinsideprooftree
\then   \let\VBOX\vbox
\else   \ifmmode\else$\let\doll=$\fi
        \let\VBOX\vcenter
\fi
\VBOX   {\baselineskip\proofrulebaseline \lineskip.2ex
        \expandafter\lineskiplimit\ifproofdots0ex\else-0.6ex\fi
        \hbox   spread\dimen5   {\hfi\unhbox\proofabove\hfi}%
        \hbox{\box0}%
        \hbox   {\kern\dimen2 \box\proofbelow}}\doll%
%
\global\dimen2=\dimen2
\global\dimen3=\dimen3
\egroup 
\ifonleftofproofrule
\then   \shortenproofleft=\dimen2
\fi
\shortenproofright=\dimen3
%
\onleftofproofrulefalse
\ifinsideprooftree
\then   \hskip.5em plus 1fil \penalty2
\fi
}
\newcommand{\gammainf}{\ensuremath{\Gamma_\infty}}
\def\SS{{\cal S}}
\newcommand\conv{{\rm conv}}
\newcommand\sort{{\rm sort}}
\newcommand\var{{\rm var}}
\newcommand\app{{\rm app}}
\newcommand\dom{{\rm dom}}
\newcommand{\type}{{\rm type}}
\newcommand\FV{{\rm FV}}
\newcommand\hfv{{\rm hfv}}
\newcommand\hfvt{{\rm hfvT}}
\newcommand\A{{\cal A}}
\newcommand\R{{\cal R}}
\newcommand\V{{\cal V}}
\newcommand\X{{\cal X}}
\newcommand{\oftype}{{:}}
\newcommand{\arr}{{\rightarrow}}
\newcommand{\redb}{\rightarrow_{\beta}}
\newcommand{\eqb}{=_{\beta}}
\newcommand{\Term}{{\rm Term}}
\newcommand\T{{\cal T}}
\newcommand{\merge}{\ltimes}
\newcommand{\comp}{\mathrel{{|}{|}{|}}}
\newcommand{\ocamlcode}[1]{\texttt{#1}}
\newcommand{\realquote}[1]{{``{#1}''}}
\newcommand{\scarequote}[1]{{`{#1}'}}
\definecolor{red}{rgb}{1.0,0.0,0.0}
\newcommand{\NotForPublication}[1]{}
\begin{document}

\maketitle


\begin{abstract}
  We present an approach to type theory in which the typing judgments
  do not have explicit contexts.  Instead of judgments of shape
  $\Gamma \vdash A : B$, our systems just have judgments of shape 
  $A : B$.  A key feature is that we distinguish free and bound variables
  even in pseudo-terms.

  Specifically we give the rules of the \scarequote{Pure Type System}
  class of type theories in this style.  We prove that the typing
  judgments of these systems correspond in a natural way with those of
  Pure Type Systems as traditionally formulated.  I.e., our systems
  have exactly the same well-typed terms as traditional presentations
  of type theory.

  Our system can be seen as a type theory in which all type judgments
  share an identical, infinite, typing context that has infinitely
  many variables for each possible type.  For this reason we call our
  system \gammainf.  This name means to suggest that our type judgment
  $A : B$ should be read as $\gammainf \vdash A : B$, with a fixed
  infinite type context called \gammainf.
\end{abstract}

\section{Introduction}

\def\der#1#2#3{\setbox0=\hbox{$\strut #2$}%
\hbox to\wd0{$\hss\displaystyle{\strut{#1}\over\strut\box0}\rlap{\raise1pt\hbox{$\,#3$}}\hss$}}
\def\leaf#1{\displaystyle{\strut\atop\strut #1}}

\subsection{Problem}

One of the important insights type theory 
   gives 
us is the need to be aware of the \emph{context}
in which one 
   works. 
This was already stressed by de Bruijn in his 1979 
paper \emph{Wees contextbewust in WOT} \cite{bru:79}, 
Dutch for \realquote{Be context aware in the mathematical vernacular}.
In type theory a term always is considered with respect to a context $\Gamma$, 
   which gives 
the types of the variables occurring free in the term.
This is also apparent in 
   the shape $\Gamma \vdash M : A$ of 
the 
   judgments 
of type theory, 
where the context $\Gamma$ is made explicit.
   Thus 
a \scarequote{bound} variable is bound \emph{locally} in a term, 
while a \scarequote{free} variable actually is \emph{globally} bound, 
namely by the context. 

In 
   customary presentations 
of first order predicate logic 
\cite[for example]{mendelson:intro,vanDalen:LandS}, 
and in fact
in the presentation of most other logics as well, free variables are not
treated in such a way.
In these logics free variables are really \emph{free}.
They are taken from an infinite supply of variables that are just
available to be used in formulas and terms,
without them having to be declared first.

This difference between type theory and predicate logic means that when
we model predicate logic in type theory, actually we do not get
the customary version of predicate logic, but instead get a version
called \emph{free} logic~\cite{lambert:free-logic1963}.
In traditional treatments  
the formula 
\((\forall x.\, P(x)) \to (\exists x.\, P(x))\)
is usually provable.
For instance a natural deduction proof of this formula would look like:
\vspace{-\bigskipamount}
\[
\der{
\der{
\der{
\leaf{[\forall x.\, P(x)]}
}
{P(y)}{\forall E}
}
{\exists x.\, P(x)}{\exists I}
}
{(\forall x.\, P(x)) \to (\exists x.\, P(x))}{{\to}I}
\]
   This proof uses 
the free variable $y$.
If one cannot use any other variables than 
   those 
introduced
by earlier rules, then this proof 
   fails. 
   Indeed, 
the type corresponding to the formula 
   is not 
inhabited: there is no term \(M\) such that 
   the following judgment is derivable: 
\[D : {*},\, P : D \to {*} \vdash M : (\Pi x : D.\, P(x)) \to (\Sigma x : D.\, P(x))\]
because we cannot 
   avoid the case in which the domain \(D\) is empty, 
   where 
the formula is false.

Now there are two things one can do to 
   bridge this gap 
between type theory and traditional logic:
\begin{itemize}
\item
Make predicate logic more like type theory, by explicitly keeping
track in the judgments of the set of variables that can be used in
the proof.

\item
Make type theory more like predicate logic, by having a version
of type theory that does not need contexts in the judgments,
i.e., in which free variables are just taken from 
   some infinite supply. 

\end{itemize}

\noindent
Although the first option is interesting too, especially in categorical treatments of logic \cite[for example]{lambek-scott:hocl1988}, in this paper we
focus on the second.
We originally thought that the dependent types in type theory would
prevent a version of type theory without contexts
from being a viable option, but to our surprise it turns out that
one \emph{can} present type
theory in a style where there are no contexts and in which therefore
free variables are really free, provided we are prepared to pay the small price
of labelling variables in a rather involved manner. 

   In those type theories actually implemented in interactive theorem provers,
the context always consists of a part holding global \emph{definitions} and parameters and
a part holding the \emph{free variables} in the term 
(as in \cite[for example]{CSC:MKM2004,severi-poll:DPTS}).
For simplicity of exposition, 
and for the sake of proving an exact correspondence between 
a standard presentation of type theory and the variant 
   we propose, in this paper we consider 
only the second part of such contexts.
We believe, however, 
that the other part can be treated in exactly the same way.

There is another reason why it is interesting to look at
a version of type theory where there are no explicit contexts.
One of the most popular architectures for proof assistants is the
\emph{LCF architecture}, named after the LCF system from the seventies \cite{LCF-book:1979}.
In the original form of such a system there is an abstract data type called \ocamlcode{term},
whose elements can only be created by a small number of functions exported
from the type-checking kernel.
Elements of this datatype always correspond to type-correct terms,
and those terms can contain free variables.

A system using this approach has 
   a kernel interface containing 
a function:
\begin{quote}
\begin{alltt}
app : term * term -> term
\end{alltt}
\end{quote}
\noindent
When this function is called, the kernel of the system makes sure that
the types of the arguments are compatible, i.e., that the result is 
again a type-correct term.

This is how the HOL family of theorem provers
is implemented.
These systems have a logical foundation that is based on a typed
lambda calculus.
However, in these systems the free variables in the terms are not recorded in
a context of variables.
The only context in these systems is the context of {definitions},
which is kept track of in a stateful variable.
Definitions are never allowed to be removed from this context,
as that would compromise the safety of the kernel.
Hence, these systems are stateful, although they can be made functional using a variant
of the approach presented here \cite{stateless}.

\newpage 
\noindent 
There are two classes of systems that can be said to implement a \emph{type theory},
   a typed lambda calculus: 
\begin{itemize}
\item
The simpler type theories, in which no dependent types are allowed.
They often are a form of simple type theory with some enhancements,
   such as some 
form of polymorphism or type classes.
These include the systems from the HOL family: HOL4, HOL Light, ProofPower and
Isabelle.
These systems can be, and are, implemented following the LCF architecture 
just outlined.
In these systems variables come from an infinite \scarequote{sea} of free variables,
and in the logical theory 
   there is \emph{no} context keeping track of the variables. 

\item
   More  
advanced type theories, often called type theory with \emph{dependent types}.
These 
come from the Dutch \textsc{Automath} systems, the Swedish tradition
of Martin-L\"of type theory, the French tradition of the Calculus of
Constructions and variations on the Edinburgh Logical Framework. 
   Their implementations 
include Coq, NuPRL, Twelf, Agda, Lego, Plastic and Epigram.
In the logical theory of these systems 
   there \emph{is} a context keeping track of the variables. 
\end{itemize}
For this second style of type theory the pure LCF approach is not attractive.
The \ocamlcode{app} function will need to check whether the contexts in which
the terms live are compatible, which will be very expensive, if it needs
to be done for the type-checking of each function application.

For this reason actual type-theoretical proof assistants have a kernel
with a different kind of interface.
In such a kernel there is no abstract datatype of \emph{terms} (there
just is a---non-abstract---type of \emph{pseudo-terms}).
Instead there is an abstract type of well-typed \emph{contexts}, which we call
\ocamlcode{context} here.
(There is also a type \ocamlcode{pseudocontext}, but that is irrelevant here.)
The interface then looks like:
\begin{quote}
\begin{alltt}
mkApp : pseudoterm * pseudoterm -> pseudoterm
add_constant : string * pseudoterm -> context -> context
\end{alltt}
\end{quote}
where \ocamlcode{mkApp} is just the constructor of the data type of
pseudoterms, whereas \ocamlcode{add\_constant} is a function that does
the type-checking: a pseudoterm will only be added to the context
after it has been type-checked.  
   (These are the actual names of the functions in the kernel of the Coq system. 
The types of those functions in 
   Coq 
are essentially what is presented here.
The type \ocamlcode{pseudoterm} is called \ocamlcode{constr} 
    in Coq, 
while the type \ocamlcode{context} is called
\ocamlcode{safe\char`\_environment}.)  The system also has a global
variable
\begin{quote}
\begin{alltt}
global_env : context ref
\end{alltt}
\end{quote}
corresponding to the \emph{state} of the system in which the user 
   works. 
   It 
is not part of the kernel (and in fact is changed back
   by an undo operation), 
but as there are no
interesting operations combining two different \ocamlcode{context}s,
only one global \ocamlcode{context} is 
   ever relevant, 
the one given by 
   the contents of 
this variable.

Although the architecture with contexts that we described
is purely functional (as is the Coq kernel), the fact that the actual
implementation has this global variable means that it is used in
a rather \scarequote{stateful} way.
The desire to investigate 
   a possible 
LCF-style kernel for
type theory that is \scarequote{less stateful} motivated this research.
In the conclusions we will address the question whether the style
of type theory we present here will lead to such a type-checking 
architecture.

\subsection{Approach}

The approach 
we will follow here is to imagine there to
be an \scarequote{infinite context} called \gammainf.
For each type-correct type $A$ this context will have infinitely
many variables $x_i^A$.
It should be stressed that this $A$ should be considered to just
be a label, a \emph{string}.
   Reduction will never happen  
inside these $A$s.
Also, $x_i^A$ and $x_i^B$ will be \emph{different} variables, even
when $A$ and $B$ are convertible, or even if they are $\alpha$-equivalent.
Note that the (free) variables in $A$ themselves will also be of shape
$y_j^B$: this means that 
   the variables themselves, as well as the  terms, 
have a recursive
tree-like structure. 

\noindent
For example, 
a variable corresponding to the successor function on natural numbers 
   looks like: 
$$s^{N^* \to N^*}$$
If we use numbered names for the variables, this might become:
$$x_0^{x_0^* \to x_0^*}$$
So the \realquote{small price} alluded to above is that a free variable $x_i^A$ in a well-typed term 
   carries 
with it the (well-founded) history of how it comes to be well-typed; 
   that is, the 
label $A$ 
   witnesses 
the validity of the context extension $\Gamma, x_i : A$.
 
Now our systems will have judgments $A : B$, which 
should be interpreted as $\gammainf \vdash A : B$.
For this reason we call the general approach to type theory 
   introduced here 
\scarequote{\gammainf} (reusing the name of the context
as the name of the system).
Note that \gammainf\ is not a single system: each type theory
will have a \scarequote{\gammainf-variant}.

The \gammainf\ approach has the essential feature that there
are two different classes of variables.
There are the variables that come from the \gammainf\ context
(the \scarequote{free} variables),
and then there is another kind of \scarequote{bound} variables.
When thinking about our systems one might imagine de Bruijn indices for
the bound variables, although the presentation 
   we give here uses 
named variables for them as well.

Although we expect 
   many type theories to  
have a \gammainf-variant, 
   here we only consider  
the class
of type theories called Pure Type Systems \cite{bar-lics-hbk} (PTSs). 
That way 
   we keep everything 
concrete, and 
   it allows 
us to prove a precise correspondence between 
   PTSs 
in the traditional style and our version 
   in \gammainf\ style. 

One should note that in 
   \gammainf\ 
any type will be inhabited, simply
because there are free variables of every type: in particular, just as in traditional treatments of logic, all domains are assumed to be inhabited. This is in essence the
same as in the version with contexts, except that 
   such variables are there explicitly recorded in the context. 

\subsection{Related Work}

To our surprise, we 
    found little 
published work investigating 
   such an approach to \emph{dependent} type theory.  
   In Church's original formulation of \emph{simple} type theory
   \cite{church:stt1940}, variables, both free \emph{and} bound, are
   annotated with their types, writing for example $\lambda
   x^{\alpha}. f^{\alpha\arr\alpha}\, x^{\alpha}$. (whereas in our formulation
   we would write $\lambda x \oftype {\alpha^*}. f^{\alpha^* \arr\alpha^*}\,
   x$.) Girard adopted \scarequote{Church-style} in the account of System F in his
   thesis \cite{girard:thesis}.
   In neither system do \emph{term} variables occur in types, while
   \emph{type} variables are not regulated by an explicit context. In
   these non-dependent type theories, contexts are not strictly
   needed, because one can define the different 
   syntactic classes --- types, terms --- in stages.
   One can regard our approach as extending that of Church to
   dependent types, but optimised to avoid the need to consider
   substitution in labels on \emph{bound} variables which otherwise
   might arise in the application rule.

Conor McBride (private communication) observed that Pollack's \textsc{LEGO} 
implementation already supported the \gammainf\ idea, and this idea was then used 
in his \textsc{OLEG} extensions, and subsequently 
in the architecture of \textsc{Epigram 1}. 
However, this approach has not been treated theoretically as we do here. 

The explicit distinction between free and bound variables on a syntactic level
already can be found in \cite{mckinna-pollack:jar1999}. 
The motivation there was to avoid capture during substitution while keeping a close correspondence with the informal presentation of PTSs with named variables, rather than how to give a \scarequote{\gammainf} presentation, as here. 
Various approaches to representing 
   binding 
are discussed in \cite{Vaughan06areview}, 
   which considers 
named free variables and de Bruijn index bound variables 
one of the best options for mechanisation. 
Indeed, in ongoing work \cite{krebbers:gammainf} the second author has formalised 
one half of the correspondence proved in Section~\ref{sec:thms} in such a style. 
   We expand on the niceties of this formalisation 
   in Section \ref{sec:conclusion}.

Elsewhere, Pollack considered presentations of type theory separating
the typing judgment from that for context
well-formedness~\cite{randy:closure-under1993}, although judgments are
still \scarequote{in context}. This allows a subtle range of issues to be
explored, especially
   regarding 
closure under \(\alpha\)-conversion, 
   here treated only informally. 

Most significantly, but starting from a rather different point,
Sacerdoti Coen considered the problem of proof-checking in the setting
of a distributed library, and hence the problem of how to
\emph{reconstruct} a context in which a given term may be successfully
type-checked~\cite{CSC:MKM2004}. This work (elaborated in
\cite{CSC:PhD}, and forming the basis of the Matita system) goes
beyond the standard PTS setting considered in this paper. 
It identifies a subtle problem which arises when attempting to merge
contexts (including definitions) in the presence of global constraints
(such as universe levels). 

\textit{Added in proof.} Between acceptance of the final version of this paper and this final version, our attention was drawn to the recent work of Matthias Boespflug~\cite{boespflug}, which itself references an earlier (2009) account of our ideas. By contrast with our presentation, which is purely first-order, Boespflug uses higher-order abstract syntax (HOAS), with a view to implementation. 

\subsection{Contribution}

We present a different approach to type theory, 
much closer to the way logical systems
usually are presented than the standard presentation, 
in which free variables are
not bound in a finite context but are taken to be really free.

We 
   validate our approach 
by proving two theorems, 
\ref{thm:PTStoGinf} and \ref{thm:GinftoPTS} below, 
   establishing 
a straightforward correspondence between 
   the standard presentation 
and the variant 
   presented here.

\subsection{Outline}

The structure of the paper is as follows.
In Section~\ref{sec:pts} we recall the
PTS rules and some of its theory.
In Section~\ref{sec:Gi} we present the \gammainf-variant of the
PTS rules, in which the judgments do not have contexts.
In Section~\ref{sec:thms} we show that both systems
correspond to each other in a natural way.
In Section~\ref{sec:conclusion} we conclude with a prospectus for
an implementation based on our variant of the PTS rules.

\section{Pure Type Systems in the traditional style}\label{sec:pts}

Pure Type Systems (PTSs) 
   generalize 
many existing type
systems and thus the class of PTSs contains various well-known
systems, like 
   systems F and F$\omega$, 
dependent type theory $\lambda$P 
and the Calculus of Constructions.

\begin{definition}\label{def.pts} For $\SS$ a set (the set of \emph{sorts}), 
$\A \subseteq \SS\times\SS$ (the set of \emph{axioms}) and 
$\R \subseteq \SS\times \SS \times \SS$ (the set of \emph{rules}), 
the \emph{Pure Type System} $\lambda(\SS ,\A
,\R)$ is the typed
$\lambda$-calculus with inference rules as in 
Figure~\ref{fig:pts}. 
\begin{figure}[t]
{\normalsize
$$\begin{array}{|llr|}
\hline
(\mbox{sort})
&
\prooftree
\justifies
\vdash s_1 : s_2
\endprooftree &\mbox{if }(s_1 ,s_2)\in \A\\
&&\\
(\mbox{var})
&
\prooftree
\Gamma \vdash A : s
\justifies
\Gamma ,x\oftype A\vdash x : A
\endprooftree &\mbox{if }x\notin\Gamma \\
&&\\
(\mbox{weak})
&
\prooftree
\Gamma \vdash A : s\;\;\;\;
\Gamma \vdash M : C
\justifies
\Gamma ,x\oftype A \vdash M : C
\endprooftree &\mbox{if }x\notin\Gamma \\
&&\\
(\Pi)
&
\prooftree
\Gamma \vdash A : s_1\;\;\;\;
\Gamma ,x\oftype A \vdash B : s_2
\justifies
\Gamma \vdash \Pi x\oftype A .B : s_3
\endprooftree & \mbox{if }(s_1 ,s_2 ,s_3) \in \R \\
&&\\
(\lambda)
&
\prooftree
\Gamma ,x\oftype A \vdash M : B\;\;\;\;
\Gamma \vdash \Pi x\oftype A .B : s
\justifies
\Gamma  \vdash \lambda x\oftype A .M : \Pi x\oftype A.B
\endprooftree &\\
&&\\
(\mbox{app})
&
\prooftree
\Gamma \vdash M : \Pi x\oftype A.B\;\;\;\;
\Gamma \vdash N : A
\justifies
\Gamma  \vdash MN : B[x := N]
\endprooftree &\\
&&\\
(\mbox{conv})
&
\prooftree
\Gamma \vdash M : A\;\;\;\;
\Gamma \vdash B : s
\justifies
\Gamma  \vdash M : B
\endprooftree & A =_{\beta} B\\[1em]
\hline
\end{array}$$
}
\caption{Typing rules for PTSs\label{fig:pts}}
\end{figure}
In the 
rules, the expressions \(M,N,A,B,C\) are taken from the set of \emph{pseudo-terms} 
$\T$ defined by\label{pts-pseudo-terms}
$$ \T ::=
s \bar \V \bar \Pi \V \oftype \T .\T \bar \lambda \V \oftype \T .\T \bar \T \T .$$
with $\V$ 
   a set of variables, 
and the $\Gamma$ taken from the set of \emph{pseudo-contexts}
   \[x_1 : A_1, \ldots , x_n :A_n \quad (x_i \in \V, A_i\in\T, 1\le i\le n)\]
   with the $x_i$ all distinct. 
(We leave the choice of variable \emph{names} 
unspecified at this point, 
as this does not matter as long as 
   \(\V\) is countably infinite, but  
below we will take a specific choice 
   of names.) 
\end{definition}

There is a lot of theory about PTSs and various systems have been
studied in the context of PTSs. We do not give a complete overview but
refer to \cite{bar-lics-hbk,bargeu-ar-hbk,geuv93} for details and
explanation. Here we just give the results that we 
use in the rest of the paper to prove the equivalence between 
a PTS and its
   \gammainf-variant. 

\begin{definition}
The pseudo-term $A$ is called \emph{well-typed} if a pseudo-context $\Gamma$ 
and pseudo-term $B$ exist such that $\Gamma\vdash A:B$ or
$\Gamma\vdash B:A$ is derivable. A pseudo-context $\Gamma$ is 
\emph{well-formed} if pseudo-terms $A$ and $B$ exist such that
$\Gamma\vdash A:B$ is derivable; a \emph{context}  is 
   a well-formed pseudo-context. 
The set of variables declared in 
pseudo-context $\Gamma$ is called the \emph{domain} of $\Gamma$,
$\dom(\Gamma)$. For $x\in\dom(\Gamma)$, let $\type_{\Gamma}(x)$
denote the \scarequote{type} assigned to $x$ in $\Gamma$: if $x:A
\in\Gamma$, then $\type_{\Gamma}(x) = A$. The expression $\type(\Gamma)$ denotes the
set of such \scarequote{types} occurring in $\Gamma$. The set of well-typed terms of
$\lambda (\SS,\A,\R)$ is denoted by $\Term(\lambda (\SS,\A,\R))$.
\end{definition}

We adopt the usual notions of bound and free variable, 
$\alpha$-conversion ($\equiv$), 
   substitution (\(B[x := N]\), used in the rule (\mbox{app})), 
$\beta$-reduction ($\redb$) and
$\beta$-equality ($\eqb$, used in the rule (\mbox{conv})) on pseudo-terms. 

The following are well-known properties of PTSs. 
The relation $\Gamma\subseteq\Delta$ 
   denotes inclusion  
between $\Gamma$ and $\Delta$ 
regarded as sets of variable assignments. 
   The third, 
Permutation, is a corollary of Strengthening.

\begin{proposition}\label{prop.metatheoryPTS}
\ 
\begin{description}
\item[Thinning] If $\Gamma \vdash M:A$ and $\Delta \supseteq \Gamma$ is
   well-formed, 
then $\Delta \vdash M:A$.
\item[Strengthening] If $\Gamma, x:B,\Delta \vdash M:A$ and
$x\notin\FV(\type(\Delta),M,A)$, 
then $\Gamma, \Delta \vdash M:A$.
\item[Permutation] If $\Gamma, x:B,y:C,\Delta \vdash M:A$ and
$x\notin\FV(C)$,  
then $\Gamma, y:C, x:B,\Delta \vdash M:A$.
\end{description}
\end{proposition}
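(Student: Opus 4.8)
The plan is to prove the three statements in the order in which they are listed, since each relies on its predecessors: Thinning is the foundational lemma, proved directly by induction on derivations together with the usual package of basic PTS lemmas; Strengthening is the one genuinely deep ingredient; and Permutation then follows cheaply from Strengthening and Thinning, as the remark preceding the proposition already anticipates.

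For \textbf{Thinning} I would argue by induction on the derivation of $\Gamma \vdash M:A$, generalising the statement over all well-formed $\Delta \supseteq \Gamma$. Every rule other than those that touch the context is preserved verbatim by the induction hypothesis; the cases needing attention are (sort), (var) and (weak). For these one first establishes, by a simultaneous induction on derivations, the standard basic lemmas: the \emph{Free Variable Lemma} ($\Gamma \vdash M:A$ implies $\Gamma$ is well-formed and $\FV(M) \cup \FV(A) \subseteq \dom(\Gamma)$), the \emph{Context Lemma} (every initial segment of a well-formed context is well-formed, and each declared type in a well-formed context is typable by a sort in the preceding segment), and the \emph{Start Lemma} (in a well-formed context $\Delta$ one has $\Delta \vdash s_1 : s_2$ for each $(s_1,s_2) \in \A$ and $\Delta \vdash x : A$ for each $x : A \in \Delta$). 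Granting these, the (sort) and (var) cases reduce immediately to the Start Lemma applied to $\Delta$, while (weak) reduces to the induction hypothesis (the extra declaration in the premise simply disappears). Organising the mutual recursion among these lemmas is the only real delicacy here, and I would follow the treatment in \cite{bar-lics-hbk,geuv93}.

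\textbf{Strengthening} is where the genuine work lies, and the step I expect to be the main obstacle. A naive induction on the derivation of $\Gamma, x:B, \Delta \vdash M:A$ that simply erases the declaration $x:B$ does not go through: a (conv) step may route through an intermediate type that genuinely mentions $x$, and a (weak) step may introduce a variable whose declared type mentions $x$ even though neither the final subject $M$ nor its type $A$ does. The correct argument (due to van Benthem Jutting) strengthens the induction hypothesis so as to record exactly which parts of the derivation depend on $x$, and rebuilds a derivation in the shortened context, re-routing the conversion steps with the help of Church--Rosser for $\redb$ and of Subject Reduction. Since Strengthening is used only as a black box in what follows, rather than reproduce this argument I would simply invoke it, pointing to \cite{bar-lics-hbk,bargeu-ar-hbk,geuv93} and the literature cited there.

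Finally, \textbf{Permutation}. Assume $\Gamma, x:B, y:C, \Delta \vdash M:A$ with $x \notin \FV(C)$. By the Free Variable Lemma the context $\Gamma, x:B, y:C, \Delta$ is well-formed, and then the Context Lemma yields $\Gamma, x:B \vdash C : s$ for some sort $s$; since $x \notin \FV(C)$ (and trivially $x \notin \FV(s)$), Strengthening gives $\Gamma \vdash C : s$. Hence $\Gamma, y:C$ is well-formed; adding back $x:B$ --- which is typable by a sort in $\Gamma$ (Context Lemma again), hence in $\Gamma, y:C$ by Thinning --- and then re-adding the declarations of $\Delta$ one at a time, each of which stays typable by Thinning because at every stage the context being built and the corresponding initial segment of the original context coincide \emph{as sets} of assignments, shows that the permuted context $\Gamma, y:C, x:B, \Delta$ is well-formed. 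But $\Gamma, y:C, x:B, \Delta$ and $\Gamma, x:B, y:C, \Delta$ are equal as sets of variable assignments, so in particular the former contains the latter; applying Thinning to the original judgment with this well-formed permuted context yields $\Gamma, y:C, x:B, \Delta \vdash M:A$, as required.
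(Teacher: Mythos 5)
Your proposal is correct and matches the paper's treatment: the paper states these as well-known properties without proof, citing the standard PTS literature, and remarks only that Permutation is a corollary of Strengthening — which is exactly the derivation you carry out (via well-formedness of the permuted context and an application of Thinning, using that the two contexts coincide as sets of assignments). Your decision to invoke Strengthening as a black box from the literature is likewise what the paper does.
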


In proving the equivalence between a PTS and its 
   \gammainf-variant, 
we need to merge two 
contexts to create a new one. Therefore
we introduce 
   the following:  

\begin{definition}\label{def.comp}
Let $\Gamma$ and $\Delta$ be two pseudo-contexts. 
   We say 
$\Gamma$ and $\Delta$ are \emph{compatible}, notation $\Gamma\comp\Delta$, if
$$\forall x\in \dom(\Gamma)\cap\dom(\Delta)( \type_{\Gamma}(x) \equiv \type_{\Delta}(x)).$$
The \emph{merge} of $\Gamma$ and $\Delta$, notation
$\Gamma\merge\Delta$, is the pseudo-context $\Gamma ,
(\Delta\setminus\Gamma)$. This is $\Gamma$ followed by the
declarations $x:B\in\Delta$ for which $x\notin\dom(\Gamma)$.
\end{definition}

Note the strong requirement in $\Gamma\comp\Delta$ that the types of
$x$ in $\Gamma$ and $\Delta$ should be $\alpha$-equal, and not just
$\beta$-convertible.

\begin{lemma}
If $\Gamma$ and $\Delta$ are 
contexts and $\Gamma \comp \Delta$, then $\Gamma \merge\Delta$ is well-formed.
\end{lemma}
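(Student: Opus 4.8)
The plan is to prove this by induction on the number $n$ of declarations occurring in $\Delta$ but not in $\Gamma$, building up $\Gamma\merge\Delta$ from $\Gamma$ by adding these declarations one at a time in the order in which they occur in $\Delta$. Write $\Delta\setminus\Gamma = y_1\oftype B_1,\ldots,y_n\oftype B_n$ in $\Delta$-order, and put $\Gamma_j := \Gamma, y_1\oftype B_1,\ldots,y_j\oftype B_j$, so that $\Gamma_0=\Gamma$ and $\Gamma_n=\Gamma\merge\Delta$. I will show by induction on $j$ that each $\Gamma_j$ is well-formed; the case $j=n$ is the lemma, and the base case $j=0$ is the hypothesis that $\Gamma$ is a context.

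For the inductive step, assume $\Gamma_j$ is well-formed; I must show $\Gamma_{j+1}=\Gamma_j,y_{j+1}\oftype B_{j+1}$ is well-formed. First note $y_{j+1}\notin\dom(\Gamma_j)$: it is not in $\dom(\Gamma)$ since $y_{j+1}\oftype B_{j+1}\in\Delta\setminus\Gamma$, and it differs from $y_1,\ldots,y_j$ since $\Delta$ is a pseudo-context (its variables are distinct). So by rule $(\mbox{var})$ it suffices to exhibit a sort $s$ with $\Gamma_j\vdash B_{j+1}:s$. Now $\Delta$ is a context, hence well-formed; writing $\Delta=\Delta',y_{j+1}\oftype B_{j+1},\Delta''$, the standard validity property of PTSs --- every declaration in a well-formed context is well-sorted over its preceding declarations --- gives $\Delta'\vdash B_{j+1}:s$ for some $s$ (see \cite{bar-lics-hbk,geuv93}). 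It remains to transport this judgment to $\Gamma_j$ using Thinning (Proposition~\ref{prop.metatheoryPTS}), for which I must check that $\Delta'\subseteq\Gamma_j$ regarded as sets of variable assignments. Take a declaration $z\oftype C$ of $\Delta'$. If $z\in\dom(\Gamma)$, then compatibility $\Gamma\comp\Delta$ gives $\type_\Gamma(z)\equiv C$, so $z\oftype C$ occurs (up to $\alpha$) in $\Gamma$, hence in $\Gamma_j$. If $z\notin\dom(\Gamma)$, then $z\oftype C$ lies in $\Delta\setminus\Gamma$ and precedes $y_{j+1}$ in $\Delta$, so $z\oftype C = y_i\oftype B_i$ for some $i\le j$ and again $z\oftype C\in\Gamma_j$. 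Thus $\Delta'\subseteq\Gamma_j$; since $\Gamma_j$ is well-formed, Thinning yields $\Gamma_j\vdash B_{j+1}:s$, and $(\mbox{var})$ then gives $\Gamma_j,y_{j+1}\oftype B_{j+1}\vdash y_{j+1}:B_{j+1}$, so $\Gamma_{j+1}$ is well-formed.

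The one genuine obstacle --- and the reason the hypothesis $\Gamma\comp\Delta$ is needed --- is the verification that $\Delta'\subseteq\Gamma_j$: were a variable shared between $\Gamma$ and $\Delta$ allowed to carry different types, Thinning would not apply. This step also relies on the merge inserting the declarations of $\Delta\setminus\Gamma$ in their $\Delta$-order, so that the prefix $\Delta'$ of $\Delta$ before $y_{j+1}$ is already contained in $\Gamma_j$. Two bookkeeping points deserve care: the argument treats pseudo-contexts up to $\alpha$-conversion, consistently with the use of $\equiv$ in Definition~\ref{def.comp} and the remark following it; and it invokes the standard \emph{correctness of contexts} lemma, which, though not among the three properties listed in Proposition~\ref{prop.metatheoryPTS}, is routine for every PTS and could alternatively be absorbed into the induction by instead inducting on a derivation witnessing the well-formedness of $\Delta$.
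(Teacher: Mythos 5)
Your proposal is correct and follows essentially the same route as the paper's own proof: add the declarations of $\Delta\setminus\Gamma$ one at a time in $\Delta$-order, use compatibility to see that the $\Delta$-prefix before each new declaration is already contained in the context built so far, and then apply Thinning (together with the implicit correctness-of-contexts property) to obtain the required sorting judgment. You merely spell out the bookkeeping (the explicit induction on $j$, the verification $\Delta'\subseteq\Gamma_j$, and the appeal to validity of contexts) that the paper leaves terse.
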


\begin{proof}
We write $x_1:B_1, \ldots, x_n:B_n$ for $\Delta$.  $\Gamma \merge\Delta$
is the pseudo-context $\Gamma, (\Delta\setminus\Gamma)$.  As $\Gamma$
is well-formed, we only have to consider the part  
$\Delta\setminus\Gamma = x_{i_1} : B_{i_1}, \ldots, x_{i_n} : B_{i_n}$. 
   But 
$x_1:B_1, \ldots, x_{i_1 -1}: B_{i_1 -1} \subseteq \Gamma$, so by Thinning $\Gamma \vdash B_{i_1} :s$ for some sort $s$, so $\Gamma, x_{i_1} :B_{i_1}$ is well-formed.\\
The same reasoning applies to $x_{i_2}:B_{i_2}, \ldots, x_{i_n}:B_{i_n}$, 
so we conclude that $\Gamma, (\Delta\setminus\Gamma)$ is well-formed. 
\end{proof}

In our system \gammainf, the free variables will have special names,
as they are labelled by their types. Of course, consistently renaming
the free variables in a judgment $\Gamma \vdash M:A$ does not change
its meaning, as the free variables in $M$ and $A$ are actually bound
in $\Gamma$. For clarity, we introduce this notion explicitly.

\begin{definition}\label{def:alpha-judg}
The judgment $\Gamma \vdash M:A$ is $\alpha$-equivalent to 
$\Gamma ' \vdash M':A'$ in case one can be obtained from the
other by renaming bound variables, where we consider the free
variables in $M$ and $A$ to be bound by their declaration in $\Gamma$.
\end{definition}

\section{Pure Type Systems in the \gammainf\ style}\label{sec:Gi}

We now make the set of 
   variables 
$\V$ explicit.
We have two kinds:  
variables $\dot x$ with a dot on top,  
   intended
to be bound by $\lambda$ and $\Pi$ binders; and variables $x^A$ 
   tagged with a pseudo-term $A$, 
   intended 
to be bound in the context.
This means we take $\V$ in Definition~\ref{def.pts} 
   as follows, where $\X$ supplies the \emph{names} 
   of 
variables: 
$$\begin{array}{rcl}
\V &::=& \dot\X \bar \X^\T \\
\X &::=& x \bar y \bar z \bar \dots \bar x_0 \bar x_1 \bar x_2 \bar \dots
\end{array}$$
Clearly the rules for $\T$ and $\V$ are mutually recursive.

Note that although the variables are \emph{intended} to be used in a certain
way (made precise in Definition~\ref{def.typeannotated} below), 
in the PTSs as defined
   above 
both kinds of $\V$ can be used for all purposes.
In particular $\dot x$ can be put in a context, $x^A$ can be bound, 
and the 
   label 
$A$ of $x^A$ need not correspond to the type of $x^A$.

Note also that the 
   definition of substitution, and hence the 
relation of $\beta$-equality, 
   is agnostic 
about the structure of the annotations of the variables.
(The definition
of $=_\beta$ in Section \ref{sec:pts} takes $\V$ just as a set).
This means that although
$$(\lambda \dot A:{*}. \dot A)\,B^{*} =_\beta B^{*}$$
we have that
$$x^{(\lambda \dot A:{*}. \dot A) B^{*}} \ne_\beta x^{B^{*}}$$

We will now define the rules of \gammainf.
To do this we first have to introduce the notion of \emph{hereditarily free variables of
the types of the free variables} in a term.
We first motivate the need for this notion.

In a PTS, the context takes care that one can only abstract over a
variable if nothing else depends on that variable. In the rules, this
is formalised by requiring that the $x:A$ 
   abstracted 
over in
the $\Pi$ or $\lambda$ rule must be the last declaration in the
context. This ensures that $x$ does not occur free in any of the other
types in the context. In \gammainf\ we do not have contexts, so we have to replace this by another side condition on the rules. We would like to have a $\Pi$ rule as follows:
$$
{
A : s_1
\quad\enskip
B : s_2
\over
\Pi \dot y {:} A.\, B[x^A := \dot y] : s_3
}
\rlap{ \small $(s_1,s_2,s_3)\in{\cal R}$}
$$
but that is wrong, because then we would be able to form (in PTS terminology)
the $\Pi$-type 
$$
{
\ldots \vdash A:* \quad \enskip A:*, P:A\arr *, Q:\Pi x\oftype A. P\, x \arr *,  a:A, h:P\, a \vdash  Q \, a\, h :*
\over
A:*, P:A\arr *, Q:\Pi x\oftype A. P\, x \arr *, h:P\, a \vdash  \Pi y :A.Q\,y\, h : *
}
$$
   But this cannot be correct, 
because $h$ is not of type $P\, y$ in the conclusion. 
In \gammainf, this derivation would read (adding some brackets for readability):
$$
{
A^*:* \quad \enskip 
Q^{\Pi \dot x\oftype A^*. (P^{A^*\arr *}\, \dot x) \arr *} \, a^{A^*}\, h^{P^{A^*\arr *}\, a^{A^*}} :*
\over
\Pi \dot y :A^*. Q^{\Pi \dot x\oftype A^*. (P^{A^*\arr *}\, \dot x) \arr *} \, \dot y\, h^{P^{A^*\arr *}\, a^{A^*}} :*
}
$$
which would be 
   derivable 
according to the $\Pi$-rule above, but
clearly 
   undesirable.

\begin{definition}\label{def:HFV}
Given $M\in \T$, we define the \emph{hereditarily free variables in $M$}, denoted $\hfv(M)$, as follows:
\begin{eqnarray*}
\hfv(s) =\hfv(\dot x) &=& \emptyset\\
\hfv(x^{A})&=& \{x^{A}\} \cup\hfv(A)\\
\hfv(F\,N) &=& \hfv(F)\cup\hfv(N)\\
\hfv(\lambda \dot x \oftype A.N) = \hfv(\Pi \dot x\oftype A.N) &=& \hfv(A)\cup\hfv(N)
\end{eqnarray*}
\end{definition}

So, where $=_\beta$ basically ignores the structure of
the type labels of 
free variables, we now take them seriously, 
collecting the variables (hereditarily) free in the type labels as well.

We put as a side condition in the $\Pi$-rule that $x^A$ should
not occur free in any of the \emph{types of the free variables in $B$}, and
similarly for the $\lambda$ rule. We give an explicit 
definition of this notion.

\begin{definition}\label{def:HFVT}
Given $M\in \T$, we define the \emph{hereditarily free variables of the types of the free variables in $M$}, denoted $\hfvt(M)$, as follows:
\begin{eqnarray*}
\hfvt(s) =\hfvt(\dot x) &=& \emptyset\\
\hfvt(x^{A})&=& \hfv(A)\\
\hfvt(F\,N) &=& \hfvt(F)\cup\hfvt(N)\\
\hfvt(\lambda \dot x \oftype A.N) = \hfvt(\Pi \dot x\oftype A.N) &=& \hfvt(A)\cup\hfvt(N)
\end{eqnarray*}
\end{definition}

So, for example $\hfvt(h^{P^{A^*\arr *}\, a^{A^*}}) = \{ P^{A^*\arr *} , a^{A^*}, A^*\}$. An easy corollary of the definition is that
$$\hfvt (M) \subseteq \hfv(M)$$

We now give the derivation rules of the system.

\begin{definition}
The derivation rules of \gammainf\ are 
given by the inference rules in Figure~\ref{fig:gammainf}. 
\begin{figure}[t]
{\normalsize
$$
\begin{array}{|llr|}
\hline
(\mbox{\rm sort})
&
\prooftree
\justifies
s_1 : s_2
\endprooftree &\mbox{if }(s_1 ,s_2)\in \A\\
&&\\
(\mbox{\rm var})
&
\prooftree
A : s
\justifies
x^A : A
\endprooftree & \\
&&\\
(\Pi)
&
\prooftree
A : s_1\;\;\;\;
B : s_2
\justifies
\Pi \dot x\oftype A .B[y^A := \dot x] : s_3
\endprooftree & \hspace{-2em}\mbox{if }(s_1 ,s_2 ,s_3) \in \R\mbox{ and }y^A \notin\hfvt(B)\\
&&\\
(\lambda)
&
\prooftree
M : B\;\;\;\;
\Pi \dot x\oftype A .B[y^A := \dot x] : s
\justifies
\lambda \dot x\oftype A .M[y^A := \dot x] : \Pi \dot x\oftype A.B[y^A := \dot x]
\endprooftree & \mbox{if }y^A \notin\hfvt(M)\cup\hfvt(B)\\
&&\\
(\mbox{\rm app})
&
\prooftree
M : \Pi \dot x\oftype A.B\;\;\;\;
N : A
\justifies
MN : B[\dot x := N]
\endprooftree &\\
&&\\
(\mbox{\rm conv})
&
\prooftree
M : A\;\;\;\;
B : s
\justifies
M : B
\endprooftree & A =_\beta B\\[1em]
\hline
\end{array}
$$
}
\caption{Typing rules for \gammainf\label{fig:gammainf}}
\end{figure}
The $\Pi$ and $\lambda$ rules have the side condition that $\dot x$ 
should not be \emph{captured} in $B$ or $M$ when doing the substitution.
That is, $\dot x$ should not be bound by a binder under which $y^A$ occurs.
\end{definition}

The side condition on the $\Pi$ and $\lambda$ rules is no restriction as you can
go to an $\alpha$-equivalent version of the term afterwards.

\section{The correspondence theorems}\label{sec:thms}
We now prove that a PTS and its \gammainf-variant correspond to each
other. For a \gammainf\ judgment $M:A$ we generate a context $\Gamma$
such that $\Gamma\vdash M:A$ is 
   PTS-derivable.  
Conversely, if
$\Gamma \vdash M : A$ is 
   PTS-derivable,  
we always have an
$\alpha$-equivalent judgment (see Definition \ref{def:alpha-judg})
$\Gamma'\vdash M':A'$ such that $M' : A'$ is derivable in \gammainf.
This specific form $\Gamma' \vdash M' : A'$ we call a \emph{type
  annotated} judgment.

\begin{definition}\label{def.typeannotated}
A \emph{type annotated context} in a PTS is a context of the form
$$x^{B_1}_1 : B_1, \ldots, x^{B_n}_n:B_n$$
where we moreover assume that all bound variables in the $B_i$ are of the form $\dot x$.
\end{definition}

\begin{definition}\label{def.typeannotated1}
A \emph{type annotated judgment} in a PTS is 
   one 
of the form
$$x^{B_1}_1 : B_1, \ldots, x^{B_n}_n:B_n \vdash M:A$$
where: $x^{B_1}_1 : B_1, \ldots, x^{B_n}_n:B_n$ is a type annotated context; 
all free variables in
$M$ and $A$ are of the form \(x^{B_i}_i\); 
   and 
all bound variables are of the form $\dot x$.
\end{definition}

We now first show the easy direction of our correspondence result:

\begin{lemma}
Every judgment $\Gamma \vdash M:A$ in a PTS is $\alpha$-equivalent to a type annotated judgment $\Gamma' \vdash M':A'$.
\end{lemma}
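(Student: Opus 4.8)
The statement asks us to show that every PTS judgment $\Gamma \vdash M : A$ is $\alpha$-equivalent (in the sense of Definition~\ref{def:alpha-judg}) to a type annotated judgment. The plan is to proceed by induction on the structure of the context $\Gamma = x_1 : A_1, \ldots, x_n : A_n$, renaming the declared variables one at a time from the outside in, so that after renaming the prefix $x_1 : A_1, \ldots, x_k : A_k$ the first $k$ declarations have already been brought into type annotated form. Concretely, I would rename $x_1$ to $x_1^{A_1}$ throughout the whole judgment (in $A_1$ itself, in all later $A_i$, and in $M$ and $A$); then, working inside the now-fixed $A_1$, rename each bound variable to a dotted variable $\dot z$; then move on to $x_2$, and so on. Since the free variables in $M$, $A$ and in every $A_i$ are, by the convention of Definition~\ref{def:alpha-judg}, regarded as bound by their declaration in $\Gamma$, each of these renamings is a legitimate $\alpha$-step on the judgment.

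First I would make precise the two separate renaming operations involved. The outer one replaces a context-declared name $x_i$ by the tagged name $x_i^{A_i}$ (where the tag $A_i$ is the \emph{already-processed} type, so the recursion on the context is well-founded); this is a renaming of a variable bound by the context. The inner one takes the pseudo-term $A_i$, which at that stage contains only properly context-bound free variables of the form $x_j^{A_j}$ ($j < i$), and renames every variable bound by a $\lambda$ or $\Pi$ inside $A_i$ to a fresh dotted variable; this is a renaming of locally bound variables. Doing the outer renamings strictly left to right guarantees that when we fix the tag on $x_i$ the type $A_i$ is already in its final form, so the tags are stable and the resulting context is genuinely of the shape $x_1^{B_1} : B_1, \ldots, x_n^{B_n} : B_n$ required by Definition~\ref{def.typeannotated}. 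After the context is handled, a final pass renames the bound variables inside $M$ and $A$ to dotted form, which gives Definition~\ref{def.typeannotated1}.

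The main obstacle — really the only subtle point — is \emph{freshness}: when we replace $x_i$ by $x_i^{A_i}$ we must be sure this name does not already occur, either as another context-declared variable or captured under a binder, anywhere in the judgment, and likewise the fresh dotted variables introduced inside the $A_i$ must avoid clashes with names already present and with each other. Since $\V = \dot\X \bar \X^\T$ is countably infinite and the judgment is a finite object mentioning only finitely many names, such fresh choices always exist; one simply chooses, at each step, a name not among the (finitely many) names occurring so far. I would also note that consecutive $\alpha$-steps compose to a single $\alpha$-equivalence, so the whole sequence of renamings witnesses $\Gamma \vdash M : A \equiv_\alpha \Gamma' \vdash M' : A'$ with the latter type annotated. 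The verification that the final judgment meets every clause of Definitions~\ref{def.typeannotated} and~\ref{def.typeannotated1} is then immediate from the construction.
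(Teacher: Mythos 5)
Your proposal is correct and takes essentially the same route as the paper's (very brief) proof: rename the context variables from left to right to the standard tagged names $x_i^{B_i}$ and $\alpha$-rename the locally bound variables to dotted form. The care you take over the well-foundedness of the left-to-right tagging and over freshness of the chosen names is a faithful elaboration of what the paper leaves implicit.
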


\begin{proof}
From left to right we rename the variables in the context (and of course also in $M$ and $A$) to the \scarequote{standard} names
$$x_1^{B_1}:B_n,\dots,x_n^{B_n}:B_n$$
(Here $x_i$ is not a meta-variable for a variable name, but really
the \emph{explicit} variable name \realquote{$x_i$} in $\X$.)
We also $\alpha$-rename any bound variable of the form $x^A$ to a fresh
variable of the form $\dot x$. 
\end{proof}

As an example, consider the PTS judgment
$$A : {*},\, a : A \,\vdash (\lambda x\oftype A.\,x)\, a : A$$
This does not fit the variable names from our $\V$, so this does not
conform to the definitions in this paper.
Instead using our variables it should be something like:
$$\dot A : {*},\, a^{*} : \dot A \,\vdash (\lambda x^{\dot B}\oftype\dot A.\,x^{\dot B})\, a^{*} : \dot A$$
This of course is not a \emph{type annotated} judgment, the annotations 
   make no 
sense at all,
but still this is a perfectly
fine 
   PTS judgment 
as defined in Definition~\ref{def.pts}.

Now according to the theorem this is $\alpha$-equivalent to a judgment that \emph{is} type annotated.
And it is, for example it is $\alpha$-equivalent to
$$x_1^{*} : {*},\, x_2^{x_1^{*}} : x_1^{*} \,\vdash (\lambda \dot x\oftype x_1^{*}.\,\dot x)\, x_2^{x_1^{*}}$$
Or, if one uses more readable names, to
$$A^{*} : {*},\, a^{A^{*}} : A^{*} \,\vdash (\lambda \dot x\oftype A^{*}.\,\dot x)\, a^{A^{*}}$$

The other part of the easy direction of our correspondence is that
a type annotated PTS judgment essentially is the same as a \gammainf\ judgment:

\begin{theorem}\label{thm:PTStoGinf}
If 
   the type annotated PTS judgment $\Gamma \vdash M : A$ is derivable,  
then $M : A$ is a derivable judgment of the corresponding \gammainf\ type theory.
\end{theorem}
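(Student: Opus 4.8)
The plan is to proceed by induction on the derivation of the type annotated PTS judgment $\Gamma \vdash M : A$, showing in each case that the corresponding \gammainf\ rule applies to produce $M : A$. The key observation driving the proof is that a type annotated context $x_1^{B_1} : B_1, \ldots, x_n^{B_n} : B_n$ is essentially redundant: each declaration $x_i^{B_i} : B_i$ records its own type in the variable's label, so the context can be \emph{reconstructed} from the hereditarily free variables of $M$ and $A$. Thus the induction hypothesis will say that for each premise $\Gamma_j \vdash M_j : A_j$ of the PTS rule, the \gammainf\ judgment $M_j : A_j$ is derivable; the work is to check the side conditions of the \gammainf\ rules and to reconcile the fact that PTS premises may live in different (sub)contexts while \gammainf\ has no context at all.

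Going through the rules: the (sort) case is immediate since both systems have the identical axiom rule. For (var), the PTS premise is $\Gamma \vdash A : s$ with $\Gamma, x^A : A$ the conclusion context (type annotated forces the declared variable to carry label $A$); by IH $A : s$ is \gammainf-derivable, and the \gammainf\ (var) rule gives $x^A : A$ directly. For (weak), the PTS conclusion $\Gamma, x^A : A \vdash M : C$ has the same subject and type as the premise $\Gamma \vdash M : C$, so the IH already delivers $M : C$ and there is nothing to do — the discarded hypothesis $\Gamma \vdash A : s$ is simply not needed, which is exactly the point of having no contexts. The (app) and (conv) cases are again essentially syntactically identical between the two systems once the IH is applied, modulo checking that the substitution $B[\dot x := N]$ in (app) behaves the same way (it does, since in a type annotated judgment all bound variables are already of the form $\dot x$).

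The interesting cases are $(\Pi)$ and $(\lambda)$. For $(\Pi)$: the PTS premises are $\Gamma \vdash A : s_1$ and $\Gamma, x^A : A \vdash B : s_2$, with conclusion $\Gamma \vdash \Pi x^A : A.\, B : s_3$. The IH gives $A : s_1$ and $B : s_2$ in \gammainf. The \gammainf\ $(\Pi)$ rule then produces $\Pi \dot y : A.\, B[x^A := \dot y] : s_3$, which is $\alpha$-equivalent to the desired $\Pi x^A : A.\, B : s_3$ — here one must observe that since $x^A$ has label exactly $A$, the PTS-style binder $\Pi x^A : A.\, B$ is a legitimate (if unusual) PTS term, and renaming its bound variable to $\dot y$ via the substitution in the \gammainf\ rule yields precisely what the rule outputs. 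Crucially one must verify the side condition $x^A \notin \hfvt(B)$: this follows because $\Gamma, x^A : A \vdash B : s_2$ is derivable, so every free variable $z$ of $B$ is declared in $\Gamma, x^A : A$, and in a type annotated context $z$ has label $\type(z)$ which (for $z \neq x^A$) lies in $\Gamma$ and hence, by well-formedness of $\Gamma$, cannot contain $x^A$ free; chasing this through $\hfv$ of the labels gives $x^A \notin \hfvt(B)$. The $(\lambda)$ case is analogous, with the additional side condition $x^A \notin \hfvt(M)$ handled the same way, using that $\Gamma, x^A : A \vdash M : B$ forces all free variables of $M$ to be declared.

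\textbf{The main obstacle} I anticipate is making the side-condition argument for $(\Pi)$ and $(\lambda)$ fully rigorous: one needs a lemma, presumably proved separately by induction on derivations, stating that if $\Gamma \vdash M : A$ is a type annotated judgment then $\hfv(M) \cup \hfv(A)$ consists exactly of variables $x_i^{B_i}$ declared in $\Gamma$ together with all their hereditary label-dependencies, and — using well-formedness of $\Gamma$ and the fact that a variable declared later cannot appear in an earlier type — that no variable can occur hereditarily free in its own "past". This is the substantive bookkeeping; everything else reduces to matching up rules and invoking $\alpha$-equivalence of judgments as in Definition~\ref{def:alpha-judg}. A secondary, more cosmetic point is being careful that the reconstructed binders $\Pi \dot y : A.\, B[x^A := \dot y]$ are taken up to $\alpha$-equivalence, so that the statement "$M : A$ is derivable" is understood modulo renaming of bound variables, matching the informal stance the paper takes toward $\alpha$-conversion.
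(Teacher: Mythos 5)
Your proposal is correct and follows essentially the same route as the paper: induction over the PTS derivation, with (sort), (var), (weak), (app), (conv) handled by direct rule-matching and the $(\Pi)$/$(\lambda)$ side conditions $y^A \notin \hfvt(\cdot)$ discharged by the observation that everything hereditarily free in a type-annotated judgment is declared in the (well-formed) context, which is exactly the paper's Lemma~\ref{lem.hfvdom}. The one place the paper is slightly more careful is the binder bookkeeping: since the type-annotated conclusion $\Gamma \vdash \Pi\dot x\oftype A.B : s_3$ forces the actual PTS premise to be $\Gamma, \dot x : A \vdash B : s_2$ (which is \emph{not} type annotated), the paper renames $\dot x$ to a fresh $y^A$ and inducts on derivation \emph{size} so the hypothesis applies to the renamed premise and the \gammainf\ substitution $[y^A := \dot x]$ returns the literal term --- whereas you assume the premise already carries $x^A$ and settle the discrepancy by $\alpha$-equivalence at the end, which works but leaves that renaming step implicit.
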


\begin{proof}
By induction on the size of the derivation of $\Gamma \vdash M : A$.
We do a case split on the last rule used in the derivation: 
\begin{itemize}
\item[(sort)]
Immediate.

\item[(weak)]
Trivial, because if $\Gamma,x\oftype A$ is a type annotated context, then certainly
$\Gamma$ is a type annotated context.

\item[(\var)] 
By induction we have $A : s$ derivable in \gammainf, 
and because the context is type annotated
the variable name must be of the form $x^A$.
Hence $x^A : A$ in \gammainf.

\item[(\conv)] 
We know that $\Gamma \vdash M : A$ and $\Gamma \vdash B : s$.
Now $A$ need not have bound variables of the form $\dot x$, 
but 
one can rename them to obtain $A' \equiv A$ such that $\Gamma \vdash M : A'$ \emph{will} be
type annotated.
Then $M : A'$ and $B : s$ in \gammainf\ by induction and therefore also $M : B$ in \gammainf.

\item[(\app)] 
Again, we might need to change the bound variable in $\Gamma \vdash M : \Pi x\oftype A.B$ to the dotted kind, to get a type annotated judgment.
Apart from that this case is trivial, just like the previous one.

\item[($\Pi$)]
The conclusion will be of the form $\Gamma \vdash \Pi\dot x\oftype A.B : s_3$.
Now if we take $y^A$ a completely fresh variable, then $\Gamma,y^A:A \vdash B[\dot x := y^A] : s_2$ will be a type annotated judgment, as well as being $\alpha$-equivalent to 
$\Gamma,\dot x:A \vdash B : s_2$.
   Accordingly, let 
$B' := B[\dot x := y^A]$, so that $B \equiv B'[y^A := \dot x]$.

Clearly now $y^A \notin\hfvt(B')$ because $\Gamma,y^A:A \vdash B' : s_2$,
so all type annotations will be typable in $\Gamma$, which does not contain
$y^A$.

By induction $A : s_1$ and $B' : s_2$ in \gammainf\ and because
$y^A \notin\hfvt(B')$ we get that
$\Pi\dot x\oftype A.B'[y^A := \dot x] : s_3$ in \gammainf.
But this is precisely $\Pi\dot x\oftype A.B : s_3$.
\item[($\lambda$)]
This case essentially follows that of the previous one.

\end{itemize}
\end{proof}

The other direction of our correspondence---from \gammainf\ to
traditional PTS style---is a bit more involved.  We need to
\emph{synthesise} an appropriate context, and because we build this
context by recursion over the type derivation, we need to merge these
synthesised contexts using $\merge$.  For this we need a number of
lemmas involving type annotated contexts in PTSs.

\begin{lemma}\label{lem.annotctxt}
If $\Gamma$ and $\Delta$ are type annotated contexts, then $\Gamma\comp\Delta$.
\end{lemma}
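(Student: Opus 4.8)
The plan is to unfold the definition of compatibility (Definition~\ref{def.comp}) and check the single requirement it imposes. So first I would take arbitrary type annotated contexts $\Gamma$ and $\Delta$, and recall that by Definition~\ref{def.typeannotated} every declaration in $\Gamma$ has the form $x^{B}:B$ — that is, the variable name carries its own type as a label — and likewise every declaration in $\Delta$ has the form $x^{C}:C$. Now suppose $v \in \dom(\Gamma)\cap\dom(\Delta)$. Since $v$ is declared in $\Gamma$, it must be of the form $v = x^{B}$ with $\type_\Gamma(v) = B$; since the \emph{same} $v$ is declared in $\Delta$, it must also be of the form $v = x^{C}$ with $\type_\Delta(v) = C$. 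But a variable of the form $x^{B}$ equals a variable of the form $x^{C}$ exactly when $B$ and $C$ are literally the same pseudo-term (the label is part of the variable's identity), so $B = C$, hence certainly $\type_\Gamma(v) \equiv \type_\Delta(v)$. This holds for every shared variable, which is precisely the condition $\Gamma\comp\Delta$.

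The essential observation making this work is the design decision, emphasised in Section~\ref{sec:Gi}, that $x^A$ and $x^B$ are \emph{different} variables whenever $A$ and $B$ differ — even up to $\alpha$-equivalence or $\beta$-conversion. Thus in a type annotated context the label on a variable is not auxiliary data that could clash; it is forced to coincide with the declared type, and any two type annotated contexts that happen to share a variable are thereby forced to agree on its type \emph{on the nose}. This is exactly why the strong $\alpha$-equality (rather than mere $\beta$-convertibility) demanded by $\comp$ — flagged in the remark following Definition~\ref{def.comp} — is harmless here.

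There is really no obstacle to speak of; the only point requiring a moment's care is the step identifying $v=x^B$ with $v=x^C$ and concluding $B=C$, which relies on reading $\V$ concretely via the grammar $\V ::= \dot\X \bar \X^\T$ given in Section~\ref{sec:Gi}: a variable in the second syntactic class is a pair (name, pseudo-term), so equality of such variables is equality of both components. One should also note in passing that a type annotated context cannot contain a dotted variable $\dot x$ in its domain (again by Definition~\ref{def.typeannotated}), so no case analysis on the \emph{kind} of the shared variable is needed beyond observing it must lie in the tagged class.
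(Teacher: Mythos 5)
Your proposal is correct and matches the paper's own argument: since a type annotated context declares each variable as $x^B : B$, any variable shared by $\Gamma$ and $\Delta$ carries the same label and hence the same declared type, so $\type_\Gamma(v)\equiv\type_\Delta(v)$ trivially. The paper's proof is just a one-line version of exactly this observation.
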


\begin{proof}
If $x^A \in \dom(\Gamma)\cap\dom(\Delta)$, then $x^A : A\in\Gamma$ and
$x^A :A \in \Delta$, of course, $A \equiv A$ and this is what we need
to prove according to Definition \ref{def.comp}. 
\end{proof}

\begin{lemma}\label{lem.annothfvt}
If $\Gamma$ is a type annotated context with $x^A\in\dom(\Gamma)$, $\Gamma \vdash M:B$ and $x^A \notin\hfvt(M,B)$, then 
$$\exists \Delta \subset \Gamma (\Delta, x^A:A\vdash M:B)$$
\end{lemma}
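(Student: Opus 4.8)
The plan is to prove the statement by induction on the structure of the type annotated context $\Gamma$, peeling off its last declaration. The base case $\Gamma = x^A : A$ is immediate: take $\Delta = \emptyset$ (the empty context, which is $\subset \Gamma$ since $\Gamma$ has one declaration), and $\Delta, x^A : A = \Gamma \vdash M : B$ is the hypothesis. For the induction step, write $\Gamma = \Gamma_0, z^C : C$ where $z^C \ne x^A$ (since $x^A \in \dom(\Gamma)$, if the last variable is $x^A$ itself we are in a situation reducible to the base case: then $\Gamma_0, x^A : A \vdash M : B$ already, with $\Gamma_0 \subset \Gamma$). So assume $z^C$ is a declaration distinct from $x^A$, hence $x^A \in \dom(\Gamma_0)$.

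The key move is to apply Strengthening (Proposition~\ref{prop.metatheoryPTS}) to remove $z^C$ from $\Gamma$. For this I must check that $z^C$ does not occur free in $M$, in $B$, or in $\type(\Delta')$ for the tail after $z^C$ — but $z^C$ is the last declaration, so there is no tail, and I only need $z^C \notin \FV(M, B)$. Here is where I use the hypothesis more carefully: the side condition is phrased in terms of $\hfvt$, not $\FV$, so I would first argue that the \emph{relevant} occurrences are controlled. Actually the cleanest route: I want to strengthen away $z^C$, which requires $z^C \notin \FV(M, B)$. If $z^C \in \FV(M, B)$, then since $z^C$ is a tagged variable $z^C$, we have $z^C \in \hfv(M) \cup \hfv(B)$, but that alone is not the obstruction — $x^A \notin \hfvt(M,B)$ says $x^A$ is not free in the \emph{type labels} of the free variables. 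So instead I should do induction differently: build $\Delta$ by traversing $\Gamma$ and keeping exactly those declarations whose variables are hereditarily needed, i.e. iterate removing the last declaration whenever it is neither $x^A$ nor (hereditarily) required. Concretely: by repeated Strengthening, remove from the end of $\Gamma$ every declaration $z^C$ with $z^C \notin \FV(M, B) \cup \FV(\text{labels of remaining declarations})$, until the last declaration is forced. One shows that this process eventually exposes $x^A$ as available to be the last declaration: the remaining declarations form some $\Delta'$ with $x^A$ somewhere in it, and by Permutation (Proposition~\ref{prop.metatheoryPTS}) — legitimate because the label $A$ of $x^A$ uses only variables that were declared \emph{before} $x^A$ in the type annotated context $\Gamma$, so nothing depends on $x^A$ that sits to its left — we can move $x^A:A$ to the end, obtaining $\Delta, x^A : A \vdash M : B$ with $\Delta \subset \Gamma$.

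The main obstacle I expect is the bookkeeping around Permutation: to move $x^A : A$ rightward past the surviving declarations $w^D : D$ one needs $x^A \notin \FV(D)$ for each such $D$. This is exactly where $x^A \notin \hfvt(M, B)$ does its work, but transitively: one has to show that if a declaration $w^D$ survives the strengthening pass (because $w^D$ is hereditarily free in $M$ or $B$), then $x^A \notin \FV(D)$ — equivalently $x^A \ne w^E$ for any $w^E \in \hfv(D)$ — which follows because $\hfv(D) \subseteq \hfvt(w^D) \subseteq \hfvt(M,B)$ when $w^D$ occurs free in $M$ or $B$, together with the hereditary closure of $\hfvt$. Getting this chain of inclusions precise, and handling the interleaving of the strengthening and permutation steps uniformly, is the delicate part; everything else is a routine application of the metatheory lemmas already granted.
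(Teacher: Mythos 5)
Your proposal is correct and, once you abandon the initial last-declaration induction, it lands on essentially the same argument as the paper: strengthen away the problematic declarations to the right of $x^A$ and then use Permutation to move $x^A:A$ to the end, with $x^A \notin \hfvt(M,B)$ supplying exactly the fact that $x^A$ does not occur in the type labels of the surviving declarations. The only difference is cosmetic --- the paper removes from $\Gamma_2$ precisely those $y^C$ with $x^A \in \FV(C)$ (arguing $y^C \notin \FV(M,B)$ since otherwise $x^A \in \hfv(C) \subseteq \hfvt(M,B)$), whereas you remove every hereditarily unneeded declaration --- but the key lemmas and the role of $\hfvt$ are identical.
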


\begin{proof}
Write $\Gamma =\Gamma_1, x^A :A, \Gamma_2$, and suppose $y^C \in
\dom(\Gamma_2)$ with $x^A\in\FV(C)$. If $y^C \in \FV(M,B)$, then $
x^A\in \hfvt(M,B)$, contradiction. So $y^C \notin\FV(M,B)$. This means
that all declarations $y^C:C$ in $\Gamma_2$ for which $x^A\in\FV(C)$
can be removed by Strengthening (Proposition \ref{prop.metatheoryPTS}),
starting from the rightmost declaration in $\Gamma_2$. We end up with a judgment
$$\Gamma_1, x^A :A, \Gamma'_2 \vdash M:B$$ with $\Gamma'_2 \subseteq
\Gamma_2$ and $x^A\notin\type(\Gamma'_2)$. Using Permutation
(Proposition \ref{prop.metatheoryPTS}), we conclude that $\Gamma_1,
\Gamma'_2, x^A :A \vdash M:B$ and we take $\Gamma_1, \Gamma'_2$ for $\Delta$. 
\end{proof}

\begin{corollary}\label{cor.annothfv}
If $\Gamma \vdash M:B$ is a type annotated judgment, there is a
$\Delta \subseteq \Gamma$ such that $\Delta \vdash M:A$ and
$\dom(\Delta) \subseteq \hfv(M,B)$.
\end{corollary}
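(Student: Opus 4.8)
The plan is to argue by induction on the number of declarations in $\Gamma$, at each step using Lemma~\ref{lem.annothfvt} to peel off one \scarequote{unneeded} variable --- i.e.\ one not hereditarily free in $M$ or $B$ --- together with all declarations that hereditarily depend on it. Two inclusions will be used repeatedly: $\hfvt(M,B)\subseteq\hfv(M,B)$ (already noted above), and $\FV(M,B)\subseteq\hfv(M,B)$, which holds because in a type annotated judgment every free variable of $M$ and $B$ is tagged, of the form $x^{B_i}$, and $\hfv(x^{B_i})\ni x^{B_i}$.

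For the induction: if $\dom(\Gamma)\subseteq\hfv(M,B)$, take $\Delta=\Gamma$ and stop. Otherwise I would pick some $x^A\in\dom(\Gamma)$ with $x^A\notin\hfv(M,B)$. Then $x^A\notin\hfvt(M,B)$ by the first inclusion, so Lemma~\ref{lem.annothfvt} gives a context $\Delta'\subset\Gamma$ with $\Delta',x^A{:}A\vdash M:B$. Since $x^A\notin\FV(M,B)$ by the second inclusion, and $x^A{:}A$ is now the last declaration, one application of Strengthening (Proposition~\ref{prop.metatheoryPTS}) removes it, leaving $\Delta'\vdash M:B$. As $\Delta'\subseteq\Gamma$, it is again a type annotated context, so $\Delta'\vdash M:B$ is a type annotated judgment with strictly fewer declarations; the induction hypothesis then supplies $\Delta\subseteq\Delta'\subseteq\Gamma$ with $\Delta\vdash M:B$ and $\dom(\Delta)\subseteq\hfv(M,B)$, which is what we want.

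The only slightly delicate points --- and where I expect the real content to sit --- are checking that the side conditions of the two imported results ($x^A\notin\hfvt(M,B)$ for Lemma~\ref{lem.annothfvt}, $x^A\notin\FV(M,B)$ for Strengthening) both fall out of the single hypothesis $x^A\notin\hfv(M,B)$ via the two inclusions, and confirming that a subcontext of a type annotated context is type annotated, so that the induction hypothesis genuinely applies to $\Delta'\vdash M:B$. Everything else is bookkeeping. (The conclusion as displayed should read $\Delta\vdash M:B$, to match the hypothesis $\Gamma\vdash M:B$.)
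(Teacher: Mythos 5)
Your proposal is correct and follows essentially the same route as the paper's own proof: remove a variable $x^A\notin\hfv(M,B)$ by combining Lemma~\ref{lem.annothfvt} (justified by $\hfvt(M,B)\subseteq\hfv(M,B)$) with Strengthening (justified by $\FV(M,B)\subseteq\hfv(M,B)$); the only difference is that you make the iteration explicit as an induction on the size of $\Gamma$, whereas the paper presents just the single removal step. You are also right that the stated conclusion should read $\Delta\vdash M:B$.
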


\begin{proof}
Let $x^A\in \dom(\Gamma)$ and $x^A\notin \hfv(M,B)$. Then $x^A\notin
\hfvt(M,B)$, so (according to Lemma \ref{lem.annothfvt}), there is a $\Delta\subseteq
\Gamma$ such that $\Delta, x^A:A\vdash M:B$. But also $x^A\notin
\FV(M,B)$, so by Strengthening (Proposition \ref{prop.metatheoryPTS}),
$\Delta\vdash M:B$. 
\end{proof}

So, in $\Gamma \vdash M:B$, we can always make the context $\Gamma$ so
small that its domain is within the set of hereditarily free variables
of $M,B$. The other way around, the hereditarily free variables of
$M,B$ should be in $\dom(\Gamma)$:

\begin{lemma}\label{lem.hfvdom}
If $\Gamma \vdash M:B$ is type annotated, then $\hfv(M,B) \subseteq \dom(\Gamma)$.
\end{lemma}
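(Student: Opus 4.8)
The plan is to proceed by induction on the derivation of the type annotated judgment $\Gamma \vdash M : B$, checking in each case that every hereditarily free variable of $M$ and of $B$ already appears in $\dom(\Gamma)$. The key observation that makes this work is that the only way a variable $x^A$ gets \emph{introduced} into a term (as a free variable of the form $x^{B_i}_i$) is via the (\var) rule, and in that rule $x^A$ is simultaneously placed in the context; moreover, the label $A$ on $x^A$ is precisely the type under which it was typed, so $\hfv(A) \subseteq \dom(\Gamma)$ follows from the same rule's premiss $\Gamma \vdash A : s$ together with Lemma~\ref{lem.annothfvt}'s companion, or more directly by a subsidiary induction showing that in any type annotated judgment, the label $A$ of a free variable $x^A$ occurring in $M$ or $B$ is itself typable in $\Gamma$, so $\hfv(A) \subseteq \hfv(M,B)$ already. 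The base cases are (sort), where $\hfv(s_1) = \hfv(s_2) = \emptyset$, and (\var), where $\hfv(x^A, A) = \{x^A\} \cup \hfv(A) \subseteq \dom(\Gamma, x^A{:}A)$ using the induction hypothesis on the premiss $\Gamma \vdash A : s$ to get $\hfv(A) \subseteq \dom(\Gamma)$.

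For the inductive cases I would argue as follows. For (weak) and (\conv), the term $M$ (or the newly attached type $B$) is already typable in a sub-derivation over the same or a smaller context, so the result is immediate from the induction hypotheses (noting that $\hfv(M,B) = \hfv(M,A) \cup \hfv(B)$ in the (\conv) case, with $\hfv(B) \subseteq \dom(\Gamma)$ coming from the premiss $\Gamma \vdash B : s$). For (\app), $\hfv(MN, B[\dot x := N]) \subseteq \hfv(M) \cup \hfv(\Pi\dot x{:}A.B) \cup \hfv(N)$ — here one uses that substituting $N$ for the \emph{bound} variable $\dot x$ cannot introduce new hereditarily free variables beyond those of $N$ and of $B$ itself — and all of these are contained in $\dom(\Gamma)$ by the two induction hypotheses. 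The $(\Pi)$ and $(\lambda)$ cases are the delicate ones: the conclusion has $\Pi\dot x{:}A.\,B[y^A := \dot x]$ (or the corresponding $\lambda$-term), and one must check that replacing the free variable $y^A$ by the bound $\dot x$ does not strand any hereditarily free variable outside $\dom(\Gamma)$. Since $y^A$ itself is removed (and in any case $y^A \in \dom(\Gamma)$ is not required once it is bound), and all the \emph{other} hereditarily free variables of $B$ and of $A$ were, by the induction hypothesis on $\Gamma, y^A{:}A \vdash B : s_2$, in $\dom(\Gamma, y^A{:}A) = \dom(\Gamma) \cup \{y^A\}$, the surviving ones lie in $\dom(\Gamma)$; and $\hfv(A) \subseteq \dom(\Gamma)$ from the premiss $\Gamma \vdash A : s_1$.

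The main obstacle I anticipate is bookkeeping around the substitution $B[y^A := \dot x]$ in the $(\Pi)$ and $(\lambda)$ rules: one needs a small auxiliary fact that $\hfv(M[y^A := \dot x]) \subseteq \hfv(M) \setminus \{y^A\}$ (together with $\hfv(A)$ if we are careful, but $\dot x$ contributes nothing to $\hfv$, so in fact $\hfv(M[y^A := \dot x]) \subseteq \hfv(M)$ and the only variable possibly lost is $y^A$), and dually for (\app) that $\hfv(B[\dot x := N]) \subseteq \hfv(B) \cup \hfv(N)$. These follow by a routine induction on the structure of $M$ (resp. $B$) from Definition~\ref{def:HFV}, using that $\hfv(\dot x) = \emptyset$; I would state this as a throwaway observation rather than a separate lemma. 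Once that substitution lemma is in hand, every case reduces to combining the induction hypotheses with the fact that each premiss's type (the $s$, $s_1$, $s_2$, or $A$ appearing on the right of a premiss) is typable in $\Gamma$, giving $\hfv$ of that type inside $\dom(\Gamma)$.
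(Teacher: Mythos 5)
Your overall strategy --- induction on the derivation of the type annotated judgment --- is essentially the paper's, and most cases go through as you describe. The paper packages it slightly differently: it proves only $\hfv(M)\subseteq\dom(\Gamma)$ by induction and then obtains $\hfv(B)\subseteq\dom(\Gamma)$ from correctness of types ($\Gamma\vdash B:s$ for some sort $s$, or $B$ is itself a sort), which lets it avoid your substitution bookkeeping in the (app) case altogether; your single induction on $\hfv(M,B)$ is a legitimate alternative, and your auxiliary fact $\hfv(B[\dot x:=N])\subseteq\hfv(B)\cup\hfv(N)$ is indeed a routine structural induction.

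The one genuine problem is the other auxiliary fact, $\hfv(M[y^A:=\dot x])\subseteq\hfv(M)\setminus\{y^A\}$, which you propose to dispatch as a throwaway observation. As stated it is false for arbitrary pseudo-terms, and the \realquote{routine induction on the structure of $M$} fails at the variable case: substitution is agnostic about the labels of variables (the paper stresses this in Section~\ref{sec:Gi}), so if $M$ contains a free variable $z^C$ with $y^A\in\hfv(C)$, that occurrence is untouched by $[y^A:=\dot x]$ and $y^A$ survives in $\hfv(M[y^A:=\dot x])$. The correct statement needs the hypothesis $y^A\notin\hfvt(M)$ --- precisely the condition that Definition~\ref{def:HFVT} exists to express. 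In your $(\Pi)$ and $(\lambda)$ cases this hypothesis does hold, but it has to be argued: since $\Gamma, y^A{:}A\vdash B:s_2$ is type annotated and derivable, every free variable $z^C$ of $B$ is declared in $\Gamma, y^A{:}A$, and $y^A$, being the last declaration, cannot occur hereditarily in the type $C$ of any earlier declaration nor in $A$ itself (apply the free-variable lemma, or this very lemma, to the prefixes of the well-formed context); hence $y^A\notin\hfvt(B)$, and only then does $\hfv(B[y^A:=\dot x])\subseteq\hfv(B)\setminus\{y^A\}\subseteq\dom(\Gamma)$ follow. The paper's own write-up of these two cases is terse on exactly this point, but the needed observation $y^A\notin\hfvt(B')$ is made explicitly in its proof of Theorem~\ref{thm:PTStoGinf}. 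With that repair your proof is sound.
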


\begin{proof}
We prove $\Gamma \vdash M:B \Rightarrow \hfv(M) \subseteq
\dom(\Gamma)$, by induction on the derivation and then we are done,
because if $\Gamma \vdash M:B$, then $\Gamma \vdash B:s$ for some sort
$s$, or $B$ is a sort. 
\begin{itemize}
\item[(\sort)] Immediate. 
\item[(\var)] By induction, $\hfv(A) \subseteq\dom(\Gamma)$, so
  $\hfv(x^A) \subseteq \dom(\Gamma, x^A : A)$.
\item[(\conv)] 
By induction, $\hfv(M)\subseteq\dom(\Gamma)$, so we are done.
\item[(\app)] 
By induction, $\hfv(F) \subseteq\dom(\Gamma)$ and $\hfv(M)\subseteq\dom(\Gamma)$, 
so $\hfv(F\,M) \subseteq\dom(\Gamma)$.  
\item[($\Pi$)]
By induction, $\hfv(A)\subseteq\dom(\Gamma)$ and $\hfv(B)\subseteq\dom(\Gamma, y^A:A)$, so 
$\hfv(\Pi \dot x\oftype A. B[y^A := \dot x]) \subseteq \dom(\Gamma)$. 
\item[($\lambda$)]
By induction, $\hfv(M)\subseteq\dom(\Gamma, y^A:A)$ and 
$\hfv(\Pi \dot x\oftype A. B[y^A := \dot x]) \subseteq \dom(\Gamma)$, so 
\(\hfv(A) \subset \dom(\Gamma)\), and hence 
$\hfv(\lambda \dot x\oftype A. M[y^A := \dot x]) \subseteq \dom(\Gamma)$.

\end{itemize}
\end{proof}

The more difficult direction of our correspondence now follows:

\begin{lemma}
Let $M : A$ be a derivable \gammainf\ judgment.
Then all free variables in $M$ and $A$ have the form $x^A$ and
all bound variables have the form $\dot x$.
\end{lemma}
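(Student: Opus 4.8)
The plan is to prove this by induction on the derivation of the \gammainf\ judgment $M : A$, doing a case split on the last rule used. The statement to establish is that in any derivable \gammainf\ judgment, every free variable occurring in $M$ or $A$ is of the tagged form $x^B$ (for some pseudo-term $B$), and every bound variable is of the dotted form $\dot x$. Since the relevant notion of ``free'' and ``bound'' variable is the syntactic one on pseudo-terms, the induction is really a matter of checking that each rule, when applied to premises with this property, produces a conclusion with the property — keeping in mind that the $\Pi$ and $\lambda$ rules perform a substitution $[y^A := \dot x]$ which \emph{removes} a tagged free variable and \emph{introduces} a dotted bound variable.

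First I would handle (sort): the conclusion $s_1 : s_2$ contains no variables at all, so there is nothing to check. For (var), the conclusion is $x^A : A$; by the induction hypothesis applied to the premise $A : s$, every free variable of $A$ is tagged and every bound variable of $A$ is dotted, and the newly introduced variable $x^A$ is itself tagged, so the conclusion has the property. For (conv), the conclusion $M : B$ has $M$ unchanged from one premise and $B$ appearing in the other premise $B : s$, so the property transfers directly from the two induction hypotheses. For (app), the conclusion is $MN : B[\dot x := N]$; the free variables of $MN$ are among those of $M$ and $N$, all tagged by induction, and the bound variables of $MN$ are among those of $M$ and $N$, all dotted; for $B[\dot x := N]$, the free variables are those of $B$ (minus $\dot x$, but $\dot x$ is not free anyway) together with those of $N$, all tagged, and the bound variables are those of $B$ together with (possibly) those of $N$ under the substitution, all dotted — here one uses that $\dot x$ is a bound-variable name being substituted for, and that the premise $M : \Pi\dot x\oftype A.B$ already guarantees $\dot x$ is dotted.

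The interesting cases are ($\Pi$) and ($\lambda$), and these are where the main (though still modest) obstacle lies. In the $\Pi$ case the conclusion is $\Pi\dot x\oftype A.\,B[y^A := \dot x] : s_3$. By induction on the premise $A : s_1$, all free variables of $A$ are tagged and all its bound variables dotted; by induction on $B : s_2$, the same holds for $B$. The substitution $B[y^A := \dot x]$ replaces the tagged free variable $y^A$ by the dotted variable $\dot x$, so every remaining free variable of $B[y^A := \dot x]$ is still a tagged variable of $B$, hence tagged; and the bound variables of $B[y^A := \dot x]$ are the bound variables of $B$ (all dotted) plus the newly-introduced binder $\dot x$ (dotted). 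Wrapping this in $\Pi\dot x\oftype A.(\cdot)$ adds the dotted binder $\dot x$ and the subterm $A$ (whose variables are already fine), and $\dot x$ becomes bound; so the conclusion has the required form. The only subtlety worth spelling out is that the side condition $y^A \notin \hfvt(B)$ is not actually needed for \emph{this} lemma — what we need is purely structural — and that one should invoke the capture-avoidance side condition on the rule only to know the substitution behaves sensibly. The ($\lambda$) case is identical in structure, applying the induction hypothesis to both premises $M : B$ and $\Pi\dot x\oftype A.\,B[y^A := \dot x] : s$ and observing that $\lambda\dot x\oftype A.\,M[y^A := \dot x]$ is built from $A$, the dotted binder $\dot x$, and $M[y^A := \dot x]$ in exactly the same way. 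I expect the proof to be short; the main thing to be careful about is bookkeeping the free/bound variables through the substitutions in the $\Pi$ and $\lambda$ rules, rather than any genuine difficulty.
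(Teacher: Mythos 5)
Your proof is correct and follows exactly the route the paper takes: the paper's entire proof is ``By induction on the derivation of $M : A$,'' and your case analysis simply fills in the routine details, with the only points needing care (the substitutions $[y^A := \dot x]$ in the $\Pi$/$\lambda$ rules and $[\dot x := N]$ in the app rule) handled correctly. Nothing further is needed.
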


\begin{proof}
By induction on the derivation of $M : A$. 
\end{proof}

\begin{theorem}\label{thm:GinftoPTS}
Let $M : A$ be a derivable \gammainf\ judgment. 
Then there is a type annotated judgment $\Gamma \vdash M:A$ 
derivable in the associated PTS, 
such that $\Gamma$ contains exactly the
variables in $\hfv(M) \cup \hfv(A)$.
\end{theorem}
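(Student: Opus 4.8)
The plan is to proceed by induction on the derivation of the \gammainf\ judgment $M : A$, in each case synthesising a context and invoking the metatheoretic machinery (Thinning, Strengthening, Permutation) together with the merge operation $\merge$ to glue together the contexts coming from the premises. By the preceding lemma we already know all free variables in $M,A$ have the shape $x^A$ and all bound variables the shape $\dot x$, so the synthesised contexts will automatically be type annotated provided we always declare a variable $x^B$ with the label-type $B$. The invariant I would carry through the induction is exactly the statement: for a derivable $M:A$ there is a type annotated $\Gamma\vdash M:A$ with $\dom(\Gamma)=\hfv(M)\cup\hfv(A)$. (One may also need, as a strengthening of the induction hypothesis, the variant where $A$ is itself a sort, so that $\hfv(A)=\emptyset$.)

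For the base cases: (sort) takes $\Gamma$ empty, since $\hfv(s_1)=\hfv(s_2)=\emptyset$. For (var), with conclusion $x^A:A$ from premise $A:s$, the induction hypothesis gives $\Gamma_0\vdash A:s$ type annotated with $\dom(\Gamma_0)=\hfv(A)$; by Lemma~\ref{lem.hfvdom} $\hfv(A)\subseteq\dom(\Gamma_0)$, and one checks $x^A\notin\dom(\Gamma_0)$ (as $x^A\notin\hfv(A)$, which holds because $x^A$ cannot be hereditarily free in its own label-type by well-foundedness of the tree structure), so the (var) rule of the PTS gives $\Gamma_0,x^A:A\vdash x^A:A$, and $\dom(\Gamma_0,x^A:A)=\hfv(A)\cup\{x^A\}=\hfv(x^A)$. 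The (conv) case just replaces $A$ by a $\beta$-equal $B$ with $B:s$; one merges the context for $M:A$ with that for $B:s$ (they are compatible by Lemma~\ref{lem.annotctxt}), applies Thinning to both premises over the merge, applies PTS (conv), and then trims the context down to $\hfv(M)\cup\hfv(B)$ using Corollary~\ref{cor.annothfv}.

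The (app) case is the first genuinely combinatorial one: from $M:\Pi\dot x{:}A.B$ and $N:A$ we get contexts $\Gamma_1,\Gamma_2$ by induction, form $\Gamma:=\Gamma_1\merge\Gamma_2$ (well-formed by the merge lemma, using compatibility from Lemma~\ref{lem.annotctxt}), Thin both premises into $\Gamma$, apply PTS (app) to obtain $\Gamma\vdash MN:B[\dot x:=N]$, and then cut down with Corollary~\ref{cor.annothfv} to a $\Delta\subseteq\Gamma$ with $\dom(\Delta)\subseteq\hfv(MN,B[\dot x:=N])$; Lemma~\ref{lem.hfvdom} gives the reverse inclusion, pinning $\dom(\Delta)$ exactly. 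The ($\Pi$) and ($\lambda$) cases are where the side condition $y^A\notin\hfvt(B)$ (resp.\ $\hfvt(M)\cup\hfvt(B)$) does its work, and this is the step I expect to be the main obstacle. From premises $A:s_1$ and $B:s_2$ we get $\Gamma_A$ and $\Gamma_B$; we merge to $\Gamma:=\Gamma_A\merge\Gamma_B$ and Thin both in. The key point is to recover the PTS shape $\Gamma',\dot x:A\vdash B:s_2$ — i.e.\ to move $y^A:A$ to the \emph{end} of the context and then $\alpha$-rename $y^A$ to $\dot x$. Since $y^A\notin\hfvt(B)$ and $y^A\notin\hfvt(s_2)$, Lemma~\ref{lem.annothfvt} applied to $\Gamma\vdash B:s_2$ yields $\Delta\subset\Gamma$ with $\Delta,y^A:A\vdash B:s_2$; now $y^A\notin\FV(\type(\text{nothing after }y^A))$ trivially since $y^A$ is last, so we may $\alpha$-convert the judgment to $\Delta,\dot x:A\vdash B[y^A:=\dot x]:s_2$, then apply the PTS ($\Pi$) rule with $\Delta\vdash A:s_1$ (obtained by Thinning, noting $\hfv(A)\subseteq\dom(\Delta)$ via Lemma~\ref{lem.hfvdom} and $y^A\notin\hfv(A)$) to get $\Delta\vdash\Pi\dot x{:}A.B[y^A:=\dot x]:s_3$. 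Finally Corollary~\ref{cor.annothfv} and Lemma~\ref{lem.hfvdom} pin the domain to $\hfv(\Pi\dot x{:}A.B[y^A:=\dot x])$. The ($\lambda$) case is analogous but must handle $M$, $B$ and the $\Pi$-type together, so one uses the side condition on both $\hfvt(M)$ and $\hfvt(B)$ and merges three synthesised contexts; the delicate bookkeeping is ensuring that removing the $y^A$-dependent tail of the context (via Strengthening inside Lemma~\ref{lem.annothfvt}) is legitimate for \emph{all} of $M$, $B$ and $s$ simultaneously, which is exactly what the hypothesis $y^A\notin\hfvt(M)\cup\hfvt(B)$ guarantees.
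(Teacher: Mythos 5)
Your proposal is correct and follows essentially the same route as the paper's proof: induction on the \gammainf\ derivation, synthesising and merging type annotated contexts via $\merge$ and Thinning, using Lemma~\ref{lem.annothfvt} to relocate $y^A$ to the end of the context in the binder cases, and pinning the domain with Lemma~\ref{lem.hfvdom} and Corollary~\ref{cor.annothfv} (the paper performs this trimming once at the very end rather than at every step, which is only a cosmetic difference). The one omission is the easy subcase $y^A\notin\dom(\Gamma)$ in the ($\Pi$) and ($\lambda$) cases, where Lemma~\ref{lem.annothfvt} does not apply and one instead appends $y^A:A$ directly by weakening; the paper handles this by an explicit case split.
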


\begin{proof}
By induction on the derivation of $M:A$ we show 
there exists
a type annotated context $\Gamma(M,A)$ 
   such that $\Gamma(M,A) \vdash M:A$. 
Note that 
$\Gamma(M,A)$ depends on the \emph{derivation} 
of the judgment $M:A$, not just on the terms 
$M$ and $A$.
\begin{itemize}
\item[(sort)]
Immediate.
\item[(\var)] 
By induction, $\Gamma(A,s) \vdash A:s$. So $\Gamma(A,s), x^A : A \vdash x^A : A$. 
\item[(\conv)] 
By induction, $\Gamma(M,A) \vdash M:A$ and $\Gamma(B,s) \vdash B:s$,
and we also know that $A\eqb B$. \\ So
$\Gamma(M,A)\merge\Gamma(B,s)\vdash M : B$ by Thinning and 
the (conv) rule. 
\item[(\app)] 
By induction, $\Gamma(F,\Pi \dot x\oftype A.B) \vdash F:\Pi \dot x \oftype A. B$ and $\Gamma(M,A) \vdash M:A$. \\ So
$\Gamma(F,\Pi \dot x\oftype A.B)\merge\Gamma(M,A)\vdash F\,M : B[\dot x := M]$ by Thinning and 
   the (app) rule. 
\item[($\Pi$)] By induction, $\Gamma(A,s_1) \vdash A:s_1$ and
  $\Gamma(B,s_2) \vdash B:s_2$.\\ 
If $y^A \notin\Gamma(B,s_2)$, then $\Gamma(A,s_1)\merge
\Gamma(B,s_2), y^A:A_i \vdash B:s_2$, so by
Thinning and the ($\Pi$) rule we have $\Gamma(A,s_1)\merge
\Gamma(B,s_2)\vdash \Pi \dot x\oftype A. B[y^A := \dot x] : s_3$.\\
If $y^A \in\dom(\Gamma(B,s_2))$, then $\Delta , y^A:A \vdash B:s_2$ for
some $\Delta \subset \Gamma(B,s_2)$. So  by
Thinning and the ($\Pi$) rule we have $\Gamma(A,s_1)\merge
\Delta\vdash \Pi \dot x\oftype A. B[y^A := \dot x] : s_3$.

\item[($\lambda$)]
By induction, we obtain 
$\Gamma(M,B) \vdash M:B$   
and $\Gamma(\Pi \dot x\oftype A. B[y^A := \dot x],s) \vdash \Pi \dot x\oftype A.\penalty100 B[y^A := \dot x]:s$.\\ 
If $y^A \notin\dom(\Gamma(M,B))$, then $\Gamma(A,s_1)\merge \Gamma(M,B),
y^A:A_i \vdash M:B$. So by Thinning and 
   the ($\lambda$) rule, we have 
 $\Gamma(\Pi \dot x\oftype A. B[y^A :=
  \dot x],s)\merge \Gamma(M,B)\vdash \lambda \dot x\oftype A. M[y^A :=
  \dot x]: \Pi \dot x\oftype A. B[y^A := \dot x]$. \\
If $y^A \in\dom(\Gamma(M,B))$, then $\Delta , y^A:A \vdash M:B$ for some
$\Delta \subset \Gamma(M,B)$, by Lemma \ref{lem.annothfvt}. So by Thinning and 
   the ($\lambda$) rule, we have 
$\Gamma(\Pi \dot x\oftype A. B[y^A :=
  \dot x],s)\merge \Delta\vdash \lambda \dot x\oftype A. M[y^A := \dot x]:
\Pi \dot x\oftype A. B[y^A \penalty 100 := \dot x]. $ 
\end{itemize}
By Lemma \ref{lem.hfvdom}, 
   $\dom(\Gamma(M,A)) \supseteq \hfv(M,A)$. 
Corollary \ref{cor.annothfv} lets us strengthen the
context to $\Delta \subseteq \Gamma(M,A)$ such that $\Delta \vdash
M:A$ and $\dom(\Delta) =\hfv(M,A)$.  
\end{proof}

\begin{corollary}
Let
$$M : A$$
be a derivable \gammainf\ judgment.
Take all variables of the form $x^A_i$ occurring
in $\hfv(M) \cup \hfv(A)$, and put them in \emph{any} order
$x^{A_1}_{i_1}, \dots, x^{A_n}_{i_n}$
such that if $x^{A_k}_{i_k}$ occurs in $A_l$ then $k < l$. 
Then the following judgment is derivable in the PTS: 
$$x^{A_1}_{i_1} : A_1, \dots, x^{A_n}_{i_n} : A_n \vdash M : A$$
\end{corollary}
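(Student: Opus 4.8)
The plan is to reduce to Theorem~\ref{thm:GinftoPTS} and then to reorder the synthesised context using Permutation. Applying Theorem~\ref{thm:GinftoPTS} to the derivable \gammainf\ judgment $M:A$ yields a type annotated judgment $\Gamma \vdash M:A$ derivable in the associated PTS, with $\dom(\Gamma)$ equal to exactly the set $S := \hfv(M) \cup \hfv(A)$. Since $\Gamma$ is a type annotated context, each of its declarations has the form $x^{C}_i : C$, so a declaration is completely determined by its variable; hence $\Gamma$ and the target context $x^{A_1}_{i_1} : A_1, \dots, x^{A_n}_{i_n} : A_n$ consist of the \emph{same} set of declarations, differing only in the order in which they are listed. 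It therefore suffices to pass from one listing to the other while preserving derivability.

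The next step is to record the structural fact shared by both listings. Define a relation $\prec$ on $S$ by $u \prec v$ iff $u$ occurs in the type label of $v$ (equivalently, $u$ occurs as a proper subterm of $v$), and let $\prec$ generate a (strict) partial order on $S$. Because $\Gamma$ is a \emph{well-formed} context, the prefix of $\Gamma$ preceding any declaration $x^{C}_i : C$ must derive $C : s$ for some sort, and so must contain every variable occurring in $C$; thus the order in which $\Gamma$ lists its declarations is a linear extension of $\prec$. The target listing is a linear extension of $\prec$ by the hypothesis of the corollary (``if $x^{A_k}_{i_k}$ occurs in $A_l$ then $k < l$'').

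Finally I would transform $\Gamma$ into the target context by a finite sequence of adjacent swaps, each discharged by Permutation (Proposition~\ref{prop.metatheoryPTS}), via a bubble-sort argument. If the current listing is not yet the target listing, choose two declarations $u : C$, $v : D$ adjacent in the current listing, with $u$ before $v$, whose relative order disagrees with the target. Since the current listing is a linear extension of $\prec$ we have $v \not\prec u$, and since the target listing is one we have $u \not\prec v$; hence $u$ and $v$ are $\prec$-incomparable, and in particular $u \notin \FV(D)$. Permutation then licenses swapping these two declarations, preserving derivability of the judgment, preserving the property of being a linear extension of $\prec$, and strictly decreasing the number of pairs out of order with respect to the target. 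Iterating, we reach exactly $x^{A_1}_{i_1} : A_1, \dots, x^{A_n}_{i_n} : A_n \vdash M : A$, the desired derivation.

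The only mildly delicate point is verifying that both listings genuinely are linear extensions of $\prec$ — for $\Gamma$ this uses well-formedness of contexts, for the target it is the stated dependency condition. Once that is in place there is no real obstacle: the remainder is the standard observation that any two linear extensions of a finite partial order are connected by adjacent transpositions of incomparable elements, each transposition being an instance of closure of PTS derivability under permutation of independent context entries. In this sense the corollary is simply a convenient repackaging of Theorem~\ref{thm:GinftoPTS}.
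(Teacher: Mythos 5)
Your proposal is correct and follows exactly the route the paper takes: its entire proof reads ``From the previous Theorem using Permutation.'' You have merely spelled out the details the authors leave implicit --- that Theorem~\ref{thm:GinftoPTS} supplies a type annotated context with domain exactly $\hfv(M)\cup\hfv(A)$, and that any two dependency-respecting orderings of those declarations are connected by adjacent transpositions each licensed by the Permutation property.
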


\begin{proof} 
From the previous Theorem using Permutation. 
\end{proof}


\section{Conclusion and Further work}\label{sec:conclusion}

There are three obvious continuations of this work:
\begin{enumerate}
\item
The first is to investigate to what extent other type
theories than the PTSs admit a \gammainf\ presentation.

\item
The second is to see how well the approach presented here
can be used as a basis of an LCF-style kernel for type theory.

\item The third is to formally develop the theory presented in
	this paper in a proof assistant.
\end{enumerate}
\noindent
With respect to the first point: we expect most type theories to have
a \gammainf-variant,  
   although the observations about universe inconsistency \cite{CSC:MKM2004}
   arising from merging contexts may complicate the picture for applied type systems 
   such as that of Coq. 
More important is to
investigate how our approach needs to be adapted to support type
  theories with definitions. 
As previously noted, in any real implementation, 
the definitions for defined constants
   form 
a more significant part of the contexts $\Gamma$ we are
   eliminating 
than the free variables. 

   We are currently investigating the second point, developing an
OCaml implementation for the PTS $\lambda P$ extended with definitions
(a system corresponding to the logical framework LF) along the
lines of this paper.  
The main question 
is how expensive, computationally, the two following operations are:
\begin{itemize}
\item
The substitutions $[y^A := \dot x]$ that occur in the $\lambda$
and $\Pi$ rules.

\item
The check of the side-condition $y^A \notin\hfvt(M,B)$ in the
$\lambda$ and $\Pi$ rules.

\end{itemize}
\noindent
The first is in some sense \scarequote{local}, because it does not look inside
   constant definitions 
in the environment.  To make the second 
reasonably efficient will be harder.  It 
   is possible we
need to consider \emph{three} kinds of variables, 
   distinguishing
(1) $\dot x$ bound variables, (2) $y^A$ variables 
   to be
substituted with bound variables later (essentially, the \emph{eigenvariables} of the (\(\Pi\)) and (\(\lambda\)) rules), and (3) $x^A$ variables that 
   do 
remain free, so they may be 
   considered as 
\scarequote{axiomatic constants} of the
system. 

In such a system, it is essential that the implementation
language can use pointer equality to efficiently determine equality of
terms (and in particular, equality of annotations in variable
occurrences).  This motivated our choice of OCaml, which is such a
language.  Although in OCaml the comparison function \realquote{\ocamlcode{=}}
does not have this feature (because floating points NaNs are not 
taken to be equal
to themselves, the system never looks at pointer equality when
evaluating \realquote{\ocamlcode{x = y}}), the comparison function
\realquote{\ocamlcode{fun x y -> Pervasives.compare x y = 0}} does.

We are currently working on the third point as well. In ongoing work \cite{krebbers:gammainf} 
the second author has formalised a large part of the theory presented in this paper 
up to the first direction of the correspondence theorem in the proof assistant 
Coq. However, the presentation in this formal development 
is slightly different from that of this paper. 
\begin{itemize}
\item Firstly, we distinguish bound from free variables at the level
        of PTS pseudo-terms, following existing practice,
        established since the third author's work with Pollack in the
        1990s~\cite{mckinna-pollack:tlca1993,mckinna-pollack:jar1999}. 
        Our informal presentation
        above uses named variables in each case, the so-called
        \textit{locally named} approach, whereas the formalisation uses
           the \textit{locally nameless} representation: 
        de Bruijn indices for bound variables and 
	names for free variables. Because we make this difference at the level of 
	PTS pseudo-terms already, we have a canonical representative for each term and 
	therefore need not worry about $\alpha$-equivalence. 
        For further details of both approaches, see 
        \cite[for example]{mckinna-pollack:jar1999, aydemir:metatheory}. 
 
\item Secondly, variable binding in the ($\Pi$) and ($\lambda$) rules of 
	\gammainf\ is handled by substituting free variables for bound variables, 
        rather than bound for free as in Section~\ref{sec:Gi}. 
           This choice has been used successfully in other formalisations
           (\textit{ibid.}), and emphasises the conceptual priority of
           free variables over bound. 
           Recent work by Pollack and Sato compares the two approaches, 
           in a detailed account of canonical representation for languages 
           with binding~\cite{SatoPollack:JSC2010}. 

\end{itemize}

	Based on the methodology described in \cite{aydemir:metatheory},
	we have combined the locally nameless presentation 
        with co-finite quantification to obtain strong
	induction principles. To be sure that our co-finite presentation is
	adequate we have proved it to be equivalent to an exists-fresh presentation.

For example the ($\Pi$) rule, in respectively exists-fresh and co-finite presentation, 
is as follows.
$$
\begin{array}{llr}
(\Pi\mbox{ exists-fresh})
&
\prooftree
A : s_1\;\;\;\;
B[0:=x^A] : s_2
\justifies
\Pi A .B : s_3
\endprooftree & \mbox{if }(s_1 ,s_2 ,s_3) \in \R\mbox{ and }x^A \notin\hfv(B)\\
& & \\
(\Pi\mbox{ co-finite})
&
\prooftree
A : s_1\;\;\;\;
\forall x \notin L\ .\ B[0:=x^A] : s_2
\justifies
\Pi A .B : s_3
\endprooftree & \mbox{if }(s_1 ,s_2 ,s_3) \in \R\mbox{ and }L \subset_{\mbox{finite}} \V\\
\end{array}
$$
Observe that we require $x^A \notin\hfv(B)$ instead of $x^A \notin\hfvt(B)$ 
   (this was also observed by John Boyland).
The reason for this is that we have to bind every occurrence of the free variable 
$x^A$ in $B[0:=x^A]$, hence $x^A$ should not be in $\FV(B)$ either. While the condition is (potentially) more expensive to check, it removes the need for Definition~\ref{def:HFVT}, in favour of the (conceptually) simpler Definition~\ref{def:HFV}. 

The other direction of the correspondence theorem 
   uses the second property, Strengthening, of 
   Proposition~\ref{prop.metatheoryPTS} in an essential way. 
   This presents two difficulties: a practical one, since the existing
   formalisations of this lemma are highly
   non-trivial~\cite{mckinna-pollack:jar1999}; and a theoretical one,
   namely that we may \emph{not} be able to establish a correspondence
   between traditional and \gammainf{} presentations of a given type
   theory without first establishing strengthening.

More interestingly,  
   from the point of view of the pragmatics of formalisation,  
for this direction it is essential to abstract over
the kinds of free variables used in the definition of PTS judgments. At first this
does not seem troublesome, however, many definitions and theorems do not just depend on
the kind of free variables but also on finite sets of free variables. Hence we are 
also required to abstract over various operations on such finite sets. 
The recently developed finite set library 
   \cite{lescuyer:containers}, based on the new type classes feature in Coq, 
might be very useful in implementing this abstraction.

\paragraph{Acknowledgments} 
Thanks to Jean-Christophe Filli\^atre for details about the
architecture of the Coq kernel.
We are grateful to the anonymous referees, and to the audience at LFMTP, especially John Boyland, Brigitte Pientka and Andrew Pitts, who made several helpful remarks, especially concerning related work. 
This research is partially funded by 
the NWO BRICKS/FOCUS project \realquote{ARPA: Advancing the Real use of Proof Assistants} 
and the NWO cluster \textsc{Diamant}. 

\bibliographystyle{eptcs}
\bibliography{Gi}

\end{document}

